\theoremstyle:=definition,remark,plain\do{%
        \expandafter\g@addto@macro\csname th@\theoremstyle\endcsname{%
            \addtolength\thm@preskip\parskip
            }%
        }
\definecolor{dnrbl}{rgb}{0,0,0.3}
\definecolor{dnrgr}{rgb}{0,0.3,0}
\definecolor{dnrre}{rgb}{0.5,0,0}
\theoremstyle{plain}
\newtheorem{thm}{Theorem}[section]
\newtheorem{lem}[thm]{Lemma}
\newtheorem{coro}[thm]{Corollary}
\newtheorem{defi}[thm]{Definition}
\theoremstyle{definition}
\newtheorem{rem}[thm]{Remark}
\numberwithin{equation}{subsection}
\let\c@table\c@figure
\newcommand{\Nat}{\mathbb{N}}
\newcommand{\restr}{\upharpoonright}  
\newcommand{\de}{\downarrow} 
\newcommand{\HNF}{Hirschfeldt, Nies, and Stephan\ }
\newcommand{\BFGM}{Becher, Figueira, Grigorieff, and Miller\ }
\newcommand{\KS}{Ku{\v{c}}era and Slaman\ }
\newcommand{\CHKW}{Calude, Hertling, Khoussainov, and Wang\ }
\newcommand{\ml}{Martin-L\"{o}f }
\newcommand{\pz}{$\Pi^0_1$\ }
\newcommand{\pzt}{$\Pi^0_2$\ }
\newcommand{\szt}{$\Sigma^0_2$\ }
\newcommand{\eg}{e.g.\ }
\newcommand{\ie}{i.e.\ }
\newcommand{\ce}{c.e.\ }
\newcommand{\dce}{d.c.e.\ }
\newcommand{\lce}{left-c.e.\ }
\newcommand{\rce}{right-c.e.\ }
\newcommand{\pf}{prefix-free }
\renewenvironment{abstract}
 { \normalsize
  \list{}{
    \setlength{\leftmargin}{.0cm}%
    \setlength{\rightmargin}{\leftmargin}%
    }%
  \item {\bf \abstractname.} \relax}
 {\endlist}
\newcommand{\inctot}{\mathtt{INCTOT}}
\DeclarePairedDelimiter{\dbra}{\llbracket}{\rrbracket}
\newcommand{\tot}{\mathtt{TOT}}
\newcommand{\rcep}{$\mathbf{0}'$-right-c.e.\ }
\newcommand{\lcep}{$\mathbf{0}'$-left-c.e.\ }
\newtheorem*{rep@theorem}{\rep@title}
\newcommand{\newreptheorem}[2]{%
\newenvironment{rep#1}[1]{%
 \def\rep@title{#2 \ref{##1}}%
 \begin{rep@theorem}}%
 {\end{rep@theorem}}}
\title{The probability of a computable output from a random oracle
\thanks{Barmpalias was supported by the 
1000 Young Talents Plan from the Chinese Government, grant no.\ D1101130.
Additional support was received by
the Chinese Academy of Sciences (CAS) and the Institute of Software of the CAS.
Cenzer was partially supported by the 
U.S.\ National Science Foundation SEALS grant NSF DMS-1362273. Porter was supported
by the National Security Agency Mathematical Sciences Program grant H98230-I6-I-D310 as part of the Young Investigator's Program. The authors thank the referees for their input, which improved the presentation of
this article.}}
\author{George Barmpalias  \and Douglas Cenzer \and Christopher P.~Porter}
\date{\today}
\begin{document}
\maketitle
\begin{abstract}
Consider a universal oracle Turing machine that 
prints a finite or an infinite binary sequence, 
based on the answers to the binary queries that it makes
during the computation. We study the probability that this output is
infinite and computable, 
when the machine is given a random (in the probabilistic sense) 
stream of bits as the answers to its queries during an infinitary computation.
Surprisingly, we find that these  
probabilities are the entire class of real numbers in $(0,1)$ that can be
written as the difference of two halting probabilities relative to the halting problem. In particular,
there are universal Turing machines which produce a computable infinite output with probability exactly
1/2. Our results contrast a large array of facts 
(the most well-known being the randomness of Chaitin's halting probability) 
which witness maximal initial segment complexity of probabilities associated with universal machines.
Our proof uses recent advances in algorithmic randomness.
\end{abstract}
\vspace*{\fill}
\noindent{\bf George Barmpalias}\\[0.5em]
\noindent
State Key Lab of Computer Science, 
Institute of Software, Chinese Academy of Sciences, Beijing, China.
School of Mathematics, Statistics and Operations Research,
Victoria University of Wellington, New Zealand.\\[0.2em] \textit{E-mail:} \texttt{barmpalias@gmail.com}.
\textit{Web:} \texttt{\href{http://barmpalias.net}{http://barmpalias.net}}\par\medskip
\noindent{\bf Douglas Cenzer}\\[0.5em]
\noindent Department of Mathematics
University of Florida, Gainesville, FL 32611\\[0.2em]
\textit{E-mail:} \texttt{cenzer@math.ufl.edu.}
\textit{Web:} \texttt{\href{http://people.clas.ufl.edu/cenzer}{http://people.clas.ufl.edu/cenzer}}\par\medskip
\noindent{\bf Christopher P.~Porter}\\[0.5em]
\noindent Department of Mathematics and Computer Science,
Drake University,
Des Moines, IA 50311\\[0.2em]
\textit{E-mail:} \texttt{cp@cpporter.com.}
\textit{Web:} \texttt{\href{http://cpporter.com}{http://cpporter.com}}\par
\vfill \thispagestyle{empty}
\clearpage

\section{Introduction}\label{hhHDcoYGX4}
A well-known way to obtain an algorithmically random stream is via the following thought 
experiment due to Chaitin \cite{MR0411829}. Consider an oracle universal Turing machine that
operates in the binary alphabet and
performs a computation according to the answers it receives to the oracle-queries that it poses.
Such a computation may halt (perhaps with some string written on an output tape) or may continue
indefinitely. If we run the machine, and during the computation we answer the adaptive queries that
appear in a random manner (providing random bits as the answers), then we may consider the probability
that the machine halts.\footnote{Chaitin \cite{MR0411829} originally used self-delimiting machines
in this thought experiment, but our version can easily be seen to be equivalent.}
Chaitin  \cite{MR0411829} showed that the binary expansion of this real number is algorithmically
random, in the classic sense of \ml \cite{MR0223179}.

Subsequent work\footnote{for example, work by
 Becher and her collaborators
\cite{firstBC,fuin/BecherC02,DBLP:journals/jsyml/BecherG05}, 
and the more recent work by the authors in \cite{ranaspro}; see Section \ref{doMNBqznv} for a
discussion.} 
revealed the algorithmic randomness of a variety of probabilities associated
with universal computations, using an approach which establishes the maximum possible
level of algorithmic randomness that is possible given the arithmetical complexity of the associated properties. As we elaborate in Section \ref{doMNBqznv}, this methodology also provided characterizations
of the probabilities in purely algorithmic terms, while a number of examples have been noticed where
it is not applicable. 

{\bf The machine model.} In the present paper we 
consider oracle Turing machines $M$ working on the binary alphabet as input/output devices, which take
as input the oracle stream $X$ and eventually (perhaps after infinitely many steps) 
output a contiguous (\ie without gaps) 
finite or infinite binary sequence $M(X)$
on its one-way output tape.\footnote{Alternatively we could consider as input the pairs $(X,n)$
of oracle binary streams and integers, and the output
as partial function $n\mapsto M(X,n)$ that is computed by $M$ on oracle $X$ and input $n$.
The point here is that the eventual content of the output tape
of $U$ is a representation (\ie the characteristic sequence) of the partial function $n\mapsto U(X,n)$.
The restriction that the output tape is written in a contiguous manner does not make any difference
to our discussion, since we do not consider events that involve partial functions
(something which is no longer true for many of the results in \cite{ranaspro}).}

We are interested in
the probability that the output $U(X)$ of
a universal oracle Turing machine $U$ is a total (\ie infinite) computable
stream, when using a random oracle $X$. Here the oracle is regarded as a random variable
(in the sense of probability, not algorithmic randomness), and the probability of an event
is the measure of oracles $X$ for which this event occurs when $U$ runs with oracle $X$.

\begin{thm}[Main result]\label{DEdI4V18tW}
The probability that the output $U(X)$ of a (universal) oracle 
Turing machine $U$ is total and computable when
reading from a random oracle has the form $\alpha-\beta$, where $\alpha,\beta$ are
\lcep reals. Conversely, given  $\alpha,\beta$ as above such that $\alpha-\beta\in (0,1)$, 
there exists a
universal oracle Turing machine $U$ such that the output $U(X)$ is total and computable
with probability exactly $\alpha-\beta$.
\end{thm}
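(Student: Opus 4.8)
Write $E_U=\{X:U(X)\text{ is total and computable}\}$, so the quantity in question is $\mu(E_U)$. The two halves of the statement are proved by quite different means: that every $\mu(E_U)$ is a difference of \lcep reals, by a $\emptyset'$‑effective approximation argument; and the realisability of every such difference in $(0,1)$, by an explicit construction of an oracle machine.

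\noindent\textbf{The forward direction.} I would use the characterisation that a real $\xi$ is a difference of two \lcep reals iff it has a $\emptyset'$‑computable rational approximation $(\theta_s)$ with $\sum_s|\theta_{s+1}-\theta_s|<\infty$: given such an approximation, $\alpha:=\theta_0+\sum_s(\theta_{s+1}-\theta_s)^{+}$ and $\beta:=\sum_s(\theta_{s+1}-\theta_s)^{-}$ are suprema of uniformly $\emptyset'$‑computable increasing rational sequences, hence \lcep, with $\alpha-\beta=\lim_s\theta_s$. The task is then to produce such an approximation of $\mu(E_U)$, uniformly in an index for $U$. Relative to $\emptyset'$ one can, at each stage, compute rational lower approximations to the left‑c.e.\ measures $\mu(V^{e}_{n})$ (where $V^{e}_{n}$ is the open set of oracles whose first $n$ output bits are defined and equal $\varphi_e\restriction n$), decide each fact $\varphi_e(i)\downarrow$, and hence enumerate the partial indices. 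One sets $\theta_s$ to a rational approximation of the measure of the union of the sets $V^{e}_{\ell(e,s)}$ over the small indices $e$ not yet discovered partial, $\ell(e,s)$ being the length to which $\varphi_e$ has so far been verified to converge. Since $\bigcup_{e\text{ total}}\bigcap_n V^{e}_{n}$ coincides with $E_U$ modulo the \sztn{} event ``$U(X)$ is finite'', such a $\theta_s$ has limit $\mu(E_U)$; \emph{the delicate point — and the main obstacle — is to choose the enumeration of the partial indices, and the accounting of overlaps between the $V^{e}_{n}$, so that programs which converge on long initial segments but are ultimately partial contribute only a bounded total amount of fluctuation to $(\theta_s)$.} (It is tempting to write instead $E_U=\{X:U(X)\text{ computable}\}\setminus\{X:U(X)\text{ finite}\}$ and argue each side is the measure of a \sztn{} class; but $\{X:U(X)\text{ computable}\}$ is only $\Sigma^0_3$, and its measure is in general not \lcen, so this shortcut fails.) This is where recent work on the arithmetical complexity of effective measures is used.

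\noindent\textbf{The converse, reducing to a non‑universal machine.} Let $\alpha,\beta$ be \lcep with $\alpha-\beta\in(0,1)$. Fix $m$ with $2^{-m}<\min(\alpha-\beta,\,1-(\alpha-\beta))$, a prefix‑free sequence of codes $\tau_j\sqsupseteq 1^{m}$ with $\sum_j 2^{-|\tau_j|}\le 2^{-m}$, and an effective enumeration $(M_j)_{j\ge 1}$ of all oracle machines. Let $U$ simulate $M_j$ on oracles extending $\tau_j$, loop (producing no output) on oracles in $[1^{m}]$ extending no $\tau_j$, and, on the suffix of any oracle not beginning with $1^{m}$, run a machine $M^{\ast}$ to be described. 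Then $U$ is universal and, since $2^\omega\setminus[1^{m}]$ is a finite union of cylinders of total measure $1-2^{-m}$,
\[
\mu(E_U)=c+(1-2^{-m})\,\mu(E_{M^{\ast}}),\qquad c:=\textstyle\sum_j 2^{-|\tau_j|}\,\mu(E_{M_j}).
\]
By the forward direction, applied uniformly, each $\mu(E_{M_j})$ is a difference of uniformly \lcep reals, so $c$ is again a difference of \lcep reals, with $0\le c\le 2^{-m}$; by the choice of $m$, $(\alpha-\beta-c)/(1-2^{-m})$ is a difference of \lcep reals lying in $(0,1)$. So it suffices to construct an oracle machine $M^{\ast}$ whose output is total and computable with probability exactly $a-b$, for \lcep reals normalised to $0\le b\le a\le 1$ with $a-b\in(0,1)$ (that every difference of \lcep reals in $(0,1)$ admits such a presentation is the relativisation of the classical fact about d.c.e.\ reals).

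\noindent\textbf{The core machine.} Split the oracle $X$ into its even part $X_e$ and odd part $X_o$, which are jointly uniform and independent: $X_e$ will drive a pair of guessing procedures and $X_o$ will be a reservoir of randomness. Because the left cuts of $a,b$ are \sztn, there are uniformly right‑c.e., increasing sequences $d_n\uparrow a$ and $e_n\uparrow b$; let $T_n,S_n$ be the computable increasing trees of $\{Y:0.Y\le d_n\}$ and $\{Y:0.Y\le e_n\}$, and put $\mathcal A_0=\bigcup_n[T_n]$, $\mathcal B_0=\bigcup_n[S_n\cap T_n]$, so $\mathcal B_0\subseteq\mathcal A_0$, $\mu(\mathcal A_0)=a$, and $\mu(\mathcal B_0)=b$ (using $b\le a$). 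At each step $M^{\ast}$ emits one symbol and runs on $X_e$ the standard guessing procedure for $\bigcup_n[T_n]$ — a counter incremented each time the current prefix of $X_e$ leaves the current $T_n$, which stabilises exactly when $X_e\in\mathcal A_0$, at the least such $n$ — together with the analogous procedure for $\bigcup_n[S_n\cap T_n]$; it then outputs the next unread bit of $X_o$ if the $T$‑counter has just been incremented, else stalls if the $S$‑counter's guess is currently confirmed, else outputs $0$. If $X_e\in\mathcal A_0\setminus\mathcal B_0$ the $T$‑counter stabilises (so only finitely many $X_o$‑bits are ever emitted) while the $S$‑guess is refuted infinitely often, so cofinitely $M^{\ast}$ outputs $0$: the output is $\sigma 0^{\infty}$ for a finite $\sigma$, i.e.\ total and computable, for every such $X$. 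If $X_e\in\mathcal B_0$ both counters stabilise and $M^{\ast}$ eventually stalls forever, so the output is finite. If $X_e\notin\mathcal A_0$ the $T$‑counter is incremented infinitely often, so all of $X_o$ is emitted in order, with $0$‑symbols interspersed at positions depending on $X_e$ alone: for each fixed such $X_e$ the resulting map $X_o\mapsto M^{\ast}(X)$ is injective, hence its preimage of the countable set of computable sequences is null, and by Fubini $\{X:X_e\notin\mathcal A_0,\ M^{\ast}(X)\text{ computable}\}$ is null. Therefore $\mu(E_{M^{\ast}})=\mu(\{X:X_e\in\mathcal A_0\setminus\mathcal B_0\})=a-b$, as required.
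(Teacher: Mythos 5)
Your forward direction contains a genuine gap --- one you flag yourself as ``the main obstacle'' and then leave open --- and since your universality reduction feeds the quantity $c=\sum_j 2^{-|\tau_j|}\mu(E_{M_j})$ back into that forward direction, the gap infects the whole argument. The missing idea is not a cleverer bounded-variation accounting of the sets $V^e_n$; it is the replacement of the $\Sigma^0_3$ condition ``$U(X)$ is total and computable'' by a condition of two-quantifier complexity having the \emph{same measure}. The paper does this as follows. Let $\tot(M)$ be the $\Pi^0_2$ class of oracles with total output and $\inctot(M)$ the ($\Pi^0_3$) subclass on which the output is moreover incomputable; set $\inctot^{\ast}(M)=\tot(M)\cap\{X : X\in\bigcap_i V_i^{M(X)}\}$ for a universal \ml test $(V_i)$, i.e.\ ``$M(X)$ is total and $X$ is not \ml random relative to $M(X)$'', which is $\Pi^0_2$. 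On every 2-random $X$ the two classes coincide: if $M(X)$ were total and incomputable while $X$ remained \ml random relative to $M(X)$, then $M(X)$ would be a base for \ml randomness, hence $\Delta^0_2$ by Hirschfeldt--Nies--Stephan, and no 2-random computes a noncomputable $\Delta^0_2$ set; conversely a 2-random cannot lie in $\bigcap_i V_i^{Z}$ for a computable $Z$. Since the 2-randoms have full measure, $\mu(\inctot(M))=\mu(\inctot^{\ast}(M))$, so $\mu(E_M)=\mu(\tot(M))-\mu(\inctot^{\ast}(M))$ is a difference of measures of $\Pi^0_2$ classes, i.e.\ of two \rcep reals, hence of two \lcep reals. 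Your proposed direct $\emptyset'$-approximation really does face unbounded variation: infinitely many partial indices $e$ whose $\varphi_e$ converge on long initial segments can each drive $\theta_s$ up and back down by amounts that need not be summable, and the heavy overlaps among the $V^e_n$ give no obvious cancellation; I see no way to control this short of in effect reproving the lemma above.

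On the converse, granting a corrected forward direction for the term $c$, your construction looks sound and is genuinely different from the paper's. The paper builds, from a canonical $\Sigma^0_2$ approximation to a prefix-free set $Q$, a machine whose output is eventually zero on $\dbra{Q}$ and incomputable on every \ml random oracle outside $\dbra{Q}$ (incomputability being witnessed by an explicit \ml test), and then assembles the universal machine through a case analysis resting on Ku\v{c}era--Slaman, Rettinger--Zheng, and the Barmpalias--Lewis-Pye theorem on differences of halting probabilities. Your even/odd splitting of the oracle, with injectivity of $X_o\mapsto M^{\ast}(X)$ plus Fubini replacing the \ml-test step, realises the difference $a-b$ in a single machine and avoids all three cited results; that is a legitimately more elementary route to the second half of the theorem. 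If you pursue it, do justify that every \lcep real is the increasing supremum of uniformly \rcep reals $d_n$ (this is essentially the paper's canonical $\Sigma^0_2$ approximation lemma), and note explicitly that in the case $X_e\in\mathcal{B}_0$ the machine's output is finite and therefore not total, so it correctly falls outside the event being measured.
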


\begin{rem}[D.c.e.\ reals]
We note that the differences of \lce reals, also known as \dce reals, 
form a field under the usual addition and multiplication, as was
demonstrated by Ambos-Spies, Weihrauch, and Zheng 
\cite{Ambos.ea:00}. Raichev \cite{Raichev:05} and Ng \cite{Ng06} showed that this field is real-closed.
More recently, 
Miller \cite{derivationmiller} developed a theory of derivation on the \dce reals, 
generalizing a result from
\cite{omegax} which will be crucial in the proof of our main result.
The same facts hold by direct relativization for the class of reals
in Theorem \ref{DEdI4V18tW}, namely
 the class of differences of \lcep reals.
\end{rem}

\begin{rem}[Significance of binary output alphabet]
The fact that we restrict our machines to the binary alphabet is crucial for our methods
and even for our results. The motivated reader is referred to 
Becher and Grigorieff \cite[Theorem9.4]{DBLP:journals/jsyml/BecherG09}
which refers to machines with infinite output alphabet and 
contrasts our Theorem \ref{DEdI4V18tW}. For example, the rather simple argument of
Section \ref{wqAZ3YkoA} relies on the fact that our output alphabet is binary.
\end{rem}

It follows from Theorem \ref{DEdI4V18tW} that, although 
the property that $U(X)$ is total and computable  is $\Sigma^0_3$-complete when $U$ is universal, 
its probability can be as simple as 1/2 and as complicated as \ml random or even
\ml random relative to the halting problem.
\begin{coro}
There is a universal oracle Turing machine $U$ such that
the probability that the output $U(X)$ is total and computable is 1/2.
\end{coro}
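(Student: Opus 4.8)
The plan is to read this off directly from the converse half of Theorem~\ref{DEdI4V18tW}. That direction asserts that for every pair of \lcep reals $\alpha,\beta$ with $\alpha-\beta\in(0,1)$ there is a universal oracle Turing machine $U$ for which the output $U(X)$ is total and computable with probability exactly $\alpha-\beta$. Hence it suffices to exhibit one such pair whose difference equals $1/2$.

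First I would take $\alpha=1/2$ and $\beta=0$. Both are rational, hence computable, and a computable real is left-c.e.\ relative to any oracle, so in particular both are \lcep; moreover $\alpha-\beta=1/2\in(0,1)$. Plugging this pair into the converse direction of Theorem~\ref{DEdI4V18tW} yields a universal oracle Turing machine $U$ with the stated property.

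There is no genuine obstacle here: all the content sits in the construction behind Theorem~\ref{DEdI4V18tW}, and the corollary is merely the remark that $1/2$ lies in the class of differences of \lcep reals lying in $(0,1)$ --- trivially so, since $1/2$ is itself such a real and $0$ is such a real. The same reasoning shows that one could replace $1/2$ by any rational, or indeed any computable real, in $(0,1)$.
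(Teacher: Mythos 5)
Your proposal is correct and is exactly the paper's (implicit) argument: the corollary is stated as an immediate consequence of the converse direction of Theorem~\ref{DEdI4V18tW}, obtained by observing that $1/2$ is a difference of $\mathbf{0}'$-left-c.e.\ reals in $(0,1)$ (e.g.\ $\alpha=1/2$, $\beta=0$, both being computable and hence left-c.e.\ relative to any oracle). Your closing remark that any computable real in $(0,1)$ would do is also consistent with the paper's framing.
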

The contrast here is that the complexity of a universal machine in combination with a 
nontrivial $\Sigma^0_3$-complete property does not necessarily transfer to
the probability of this property. As we are going to explain later on, there are two main
nontrivial reasons for this phenomenon:
\begin{itemize}
\item properties of bases for \ml randomness;
\item properties of differences of universal halting probabilities.\footnote{Here and in the following
we use the term  `universal probability' to refer to the probability that some property holds
when a universal Turing machine runs on a random oracle; see \cite{ranaspro} for various 
examples of universal probabilities.}
\end{itemize}
An oracle $X$ is a base for \ml randomness if it is computed by another oracle
which is \ml random relative to $X$. This class was introduced by Ku\v{c}era \cite{MR1238109}
and has been studied extensively since. First, we will use the fact from 
\HNF \cite{MR2352724} that bases for \ml randomness are computable from the halting problem
in order to show that the probability in Theorem \ref{DEdI4V18tW} is the same as the probability
of another property which has simpler arithmetical complexity, namely that of being defined
as the disjunction of a \pzt and a \szt formula. This will be sufficient for one direction of the
characterization in Theorem \ref{DEdI4V18tW}.
For the other direction we will use a property of differences of universal halting probabilities that was
recently discovered in \cite{omegax}. A simple way to state this property is that if $\alpha,\beta$
are universal halting probabilities and $q>1$ is a rational number, then at least one of 
$\alpha-\beta$, $q\cdot\alpha-\beta$ is \ml random and either a \lce or a \rce real.

\subsection{Overview of this article}
After the background and terminology of Section \ref{V5gJ2M7GQp},
we briefly discuss previous work in the literature regarding
the algorithmic randomness of probabilities of universal machines in
Section \ref{doMNBqznv}.
The point we are making here is that the 
characterization given in our result, Theorem \ref{DEdI4V18tW},
presents a new paradigm in relation to the existing work on the algorithmic 
randomness of probabilities of machines.
In Section \ref{vlfBnKCiKF} we briefly discuss a universal probability in the context of formal systems
and Chaitin's popular metamathematical considerations. We point out that the probability
that a universal machine will produce an undecidable sentence of Peano arithmetic is a 
\ml random \lce real. This probability is of the form $\alpha-\beta$ for two \ml random \lce reals
$\alpha,\beta$, just like the property of Theorem \ref{DEdI4V18tW}. However it behaves
very differently, since it can always be approximated computably from the left
(a consequence of the recent work in \cite{omegax}).
This example provides further context for our main result, and shows that 
a natural property that can be expressed as 
the difference of two $\Sigma^0_n$ classes does not necessarily admit a probability characterization
such as the one given in Theorem \ref{DEdI4V18tW}.
Finally Section  \ref{Md9g6AfPbo} contains the proof of Theorem \ref{DEdI4V18tW}.

\subsection{Background, notation and terminology}\label{V5gJ2M7GQp}
The reader will need to be familiar with the basic notions of computability theory and 
the definition of \ml randomness \cite{MR0223179}.
Let $\emptyset^{(n)}$ be the $n$th iteration of the halting set and let
$\mathbf{0}^{(n)}$ be its Turing degree. Let $\emptyset'$ also denote the halting problem
and let $\mathbf{0}'$ denote the degree of the halting problem.
For each $n>0$ we say that a real is $n$-random if it is \ml random relative to $\emptyset^{(n-1)}$.
A real is called \lce if it is the limit of a computable increasing sequence of rational numbers.
Similarly, a real is called \rce if it is the limit of a computable decreasing sequence of rational numbers.
These notions relativize with respect to $\emptyset^{(n)}$ for each $n>0$. In the present paper
the \lcep and the \rcep reals will be particularly relevant.

Some familiarity with the celebrated 
characterization of halting probabilities as the \ml random \lce reals will also be useful.
This well-known result was a consequence of the cumulative effort of
Solovay \cite{Solovay:75u}, \CHKW \cite{Calude.Hertling.ea:01}, and \KS \cite{Kucera.Slaman:01}.
A summary of basic facts about the representation of open 
$\Sigma^0_n$ classes of streams as $\Sigma^0_n$ \pf sets of
strings and the characterization of their measures as the $\mathbf{0}^{(n)}$-left c.e.\ reals
can be found in \cite{ranaspro}.
If $Q$ is a set of strings, we let $\dbra{Q}$ be the set of streams which have a prefix in $Q$.
Moreover given a class $\mathcal{C}$ of streams we let $\mu(\mathcal{C})$ denote the Lebesgue
measure of $\mathcal{C}$.

As we indicated early in Section \ref{hhHDcoYGX4}, our Turing machine model is the
standard machine $M$ with a one-way read-only input tape,  a working tape and a
one-way write-only output tape which is initially blank and on which the output is printed in binary
contiguously (\ie without leaving blanks between two bits) by a head that moves only to the right.
Here the contents of the input tape $X$ can be treated as an oracle, or alternatively 
as a random variable, in which case we talk about {\em randomized} or
{\em probabilistic} machines and computations. In this case $M(X)$ denotes the 
contents of the output tape when the machine is run indefinitely.
Alternatively, 
the reader may think in terms of
monotone machines such as those used in Levin in \cite{levinthesis,Levin:73} 
in order to give a definition of
the algorithmic complexity of finite objects (a similar notion was used earlier 
by Solomonoff \cite{Solomonoff:64}).
Let $\preceq$ denote the prefix relation amongst strings.
A monotone machine can be thought of as
a Turing machine $M$ 
operating on finite binary programs with the 
monotonicity property that
if $\sigma\preceq \tau$, $M(\sigma)\de$, and $M(\tau)\de$, then $M(\sigma)\preceq M(\tau)$.
In this case, given an infinite binary stream $X$, we let $M(X)$ denote the supremum 
of $M(\sigma)$ for all prefixes $\sigma$ of $X$.
The machines constructed in Section \ref{M2evXEjHpf} are best thought of as monotone machines. 
In any case, the underlying notion is that of an infinitary computation, which is 
performed over an infinite number of stages and where the output is either a binary string or
a binary stream (\ie an infinite binary sequence).
A Turing machine is universal if it can simulate any other Turing machine with a constant
overhead on the input tape. Let $\ast$ denote the concatenation of stings. 
Given an effective list $(M_e)$ of all oracle machines, an 
oracle machine $U$ is universal if 
there exists a computable function $e\mapsto\sigma_e$ from 
numbers into a \pf set of strings such that
$U(\sigma_e\ast X)[s]=M_e(X)[s]$ for all $e, X, s$, where `$[s]$' indicates the
state of the preceding computation after $s$ many steps.
For the case of monotone machines,
the definition of universality is analogous.\footnote{We 
stress that this standard notion of universality is quite different than the notion
of {\em optimal machines} in the context of Kolmogorov complexity (e.g.\ see \cite[Definition 2.1]{MR1438307}).}

Given an oracle Turing machine $M$ we consider properties of oracles $X$ of the type
\[ 
\textrm{$\mathcal{P}(X)$:\ \  the output $M(X)$ belongs to a class $\mathcal{C}$.}
\]
The probability of such a property $\mathcal{P}$ with respect to an oracle machine $M$
is simply the measure of the set of oracles $X$ which have the property $\mathcal{P}$.
When we talk about the {\em universal probability of a property} $\mathcal{P}$
we mean the probability of $\mathcal{P}$ with respect to a universal oracle Turing machine.
This is the formal context for our main result, Theorem \ref{DEdI4V18tW}, which refers to
the property that $M(X)$ generates an infinite computable binary stream as its output.
In Section \ref{doMNBqznv} the reader may find various examples of different
properties $\mathcal{P}$ with respect to which the universal probabilities have been characterized.

More special results and notions will be defined
and cited in the text, when we need them.
For example we make use of relatively recent results from 
\HNF \cite{MR2352724} as well as Barmpalias and Lewis-Pye \cite{omegax}.
For a more comprehensive background in the area between 
computability theory and algorithmic randomness 
we refer the reader to the monographs Downey and
Hirschfeldt 
 \cite{rodenisbook} and Nies \cite{Ottobook}, 
while Calude \cite{caludebook} has an information-theoretic perspective. 
Odifreddi \cite{Odifreddi:89,Odifreddi:99} is a standard reference in classical computability
theory while Li and Vit{\'a}nyi \cite{MR1438307} is a standard reference in the theory of Kolmogorov complexity (which provides another facet of algorithmic randomness).

\section{Algorithmic randomness of probabilities}
We give some context for our main result and demonstrate its uniqueness in this line of research.

\subsection{Previous work---randomness by maximality}\label{doMNBqznv}
Since Chaitin's work many more examples of random numbers have been exhibited as
probabilities of certain properties of various models of universal machines.
A major influence in this line of work was a series of papers by Becher and her collaborators
(\eg \cite{firstBC,fuin/BecherC02,DBLP:journals/jsyml/BecherG05}), while the authors of the present
article recently pushed this line of work to obtain complete characterizations of such probabilities
in terms of algorithmic randomness in \cite{ranaspro}.\footnote{In \cite{ranaspro} the reader will also
find a more detailed summary of the results in this topic.} Although the arguments employed
in these proofs of randomness may seem varied (some expressed in terms of initial segment complexity
and some in terms of statistical tests) they all follow a general paradigm, which we may call
{\em randomness by maximality}. For example, 
\begin{equation*}
\parbox{14.4cm}{given a property $P$ of a certain arithmetical complexity,
one shows that the probability that a universal oracle Turing machine will have $P$ when it runs on
a random oracle is algorithmically random with respect to all statistical tests of the same
arithmetical complexity as $P$.}
\end{equation*}
Note that the arithmetical complexity of the given property automatically imposes an analogous upper 
bound on the level of the algorithmic randomness that the associated probability possesses.
Hence in this methodology one shows that the probability of the property $P$ is
{\em as algorithmically random as it can possibly be, given its arithmetical complexity}, hence the name
{\em randomness by maximality}.

Such a plan can often be carried out, as it is elaborated in  \cite{ranaspro}, 
by embedding a member of the universal \ml test of the respective arithmetical complexity
into an oracle Turing machine $M$ in such a way that the class of oracles $X$ that make 
the computation $M(X)$ satisfy property $P$ is identical to the class of oracles $X$
that fail the respective \ml test.\footnote{For example, in order to show that the halting probability
(as described in the example above) is \ml random, one can consider a member $V$ 
of the universal \ml test and a machine $M$
which halts exactly on the oracles in $V$. Then the halting probability $\alpha$ of $M$ will be the
measure of $V$, which is \ml random. Since $M$
will be simulated by the universal oracle Turing machine $U$, the halting probability
of $U$ will be the sum of the \ml random number $\alpha$ and another real of similar complexity,
which (by known facts) has to be \ml random.}
This methodology (and its counterpart in terms of initial segment complexity) has produced
many results of the form
\begin{equation*}
\parbox{14.4cm}{the probability of property $P$ which lies at the $n$th level of arithmetical complexity is 
algorithmically random with respect to all statistical tests of the same arithmetical complexity.}
\end{equation*}
For example, halting is a $\Sigma^0_1$ property, and the halting probability of a universal machine
is random with respect to all unrelativized (hence $\Sigma^0_1$) \ml tests, also called 1-random.
Similarly, $\Sigma^0_2$ properties (\eg infinite/finite output) 
often produce 2-random probabilities (\eg \cite{firstBC,ranaspro}),
$\Sigma^0_3$ properties (\eg almost everywhere totality of computed function) 
often produce 3-random probabilities (\eg \cite{fuin/BecherC02,ranaspro})
and some $\Sigma^0_4$ properties (\eg universality) 
produce 4-random probabilities \cite{Barmpalias3488}.
In most of these cases, this maximality approach also allows
for a complete characterization
of the probabilities considered in terms of 
algorithmic definability and randomness.

Despite all these examples, it has been noticed on several occasions that the 
probability of a property with respect
to the universal Turing machine may not necessarily have the
maximum
algorithmic randomness that  its arithmetical complexity allows.
For example, \BFGM \cite{jsyml/BecherFGM06}
showed that for any $n>1$, 
the probability that the universal oracle Turing machine will halt
and output a string in a given $\Sigma^0_n$-complete set 
(this is a $\Sigma^0_n$-complete property)
is not $n$-random. 
A few more such counterexamples are briefly discussed in \cite{ranaspro}.

\subsection{Restricted halting probabilities and differences}\label{vlfBnKCiKF}
In this section we discuss some metamathematical issues that are related to
universal halting probabilities. During this discussion the machine model is
the oracle Turing machine, where the oracle $X$ is regarded as the input
and the output is a finite string, which is produced after a computation has halted.\footnote{This contrasts
the main part of the paper where we consider infinitary computations.}
Following Chaitin's widely publicized metamathematical
considerations (mainly based on \cite{MR0411829,Chaitin:1992:II}), 
part of the appeal of the halting probability comes from
its connections with formal systems of mathematics. For example, if
we effectively identify binary strings with statements in Peano arithmetic,
the set $A$ of theorems of Peano arithmetic is a \ce set of strings. Chaitin 
\cite{chaitin2004algorithmic} observed
that in this case the probability $\Omega(A)$ that the universal machine will
produce a theorem of Peano arithmetic is a \lce \ml random number.
Let us now consider the set $B$ of arithmetical sentences which are undecidable in
the formal system of Peano arithmetic, \ie neither they nor their negation is provable 
from the axioms of Peano arithmetic. The set $B$ is \pz and by G\"{o}del's incompleteness theorem, it
is nonempty. Which properties are satisfied by the probability $\Omega(B)$ that the randomized 
universal machine will produce an undecidable sentence? In \cite{omegax}, 
answering a problem from  \cite[Question 8.10]{MR2248590}
and \cite{DBLP:journals/jsyml/BecherG05,jsyml/BecherFGM06}, 
it was shown that for any nonempty \pz set $B$, the number $\Omega(B)$
is both \lce and \ml random. Applying this result to the set of undecidable sentences, we get
 the following counterintuitive fact.
\begin{thm}\label{6FcQ8QWaaZ}
The probability that the randomized universal machine outputs an undecidable 
sentence is a \lce \ml random real.
\end{thm}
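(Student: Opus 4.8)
The plan is to deduce the statement from the general fact recalled just above (due to \cite{omegax}): for \emph{every} nonempty $\Pi^0_1$ set $B$ of strings, the probability $\Omega(B)$ that the randomized universal machine halts with output in $B$ is simultaneously \lce and \ml random. Thus the only thing left to verify is that, under an effective identification of arithmetical sentences with binary strings, the set $B$ of sentences undecidable in Peano arithmetic really is a nonempty $\Pi^0_1$ set of strings.

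First I would check that $B$ is $\Pi^0_1$. Fixing a standard G\"{o}del numbering and proof predicate, the set $\mathrm{Thm}(\mathrm{PA})$ of codes of theorems of Peano arithmetic is a \ce set of strings, enumerated by searching through finite PA-derivations. A sentence $\varphi$ is undecidable precisely when neither $\varphi$ nor $\neg\varphi$ belongs to $\mathrm{Thm}(\mathrm{PA})$; since $\varphi\mapsto\neg\varphi$ is computable, $B$ is the conjunction of two $\Pi^0_1$ conditions on $\varphi$, hence $\Pi^0_1$. Nonemptiness of $B$ is the syntactic content of G\"{o}del's first incompleteness theorem: assuming PA is consistent, its canonical G\"{o}del sentence (equivalently, $\mathrm{Con}(\mathrm{PA})$) is neither provable nor refutable in PA, so it lies in $B$. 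Applying the theorem of \cite{omegax} to this $B$ then yields that $\Omega(B)$ is \lce and \ml random; via the Solovay--\CHKW--\KS characterization recalled in Section~\ref{V5gJ2M7GQp}, one may additionally describe $\Omega(B)$ as a halting probability of some universal \pf machine.

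I do not expect a genuine obstacle on the side of this theorem, since all of the real work is absorbed into the cited result of \cite{omegax}. What makes the conclusion counterintuitive --- and what the argument of \cite{omegax} must overcome --- is that a priori the co-c.e.\ nature of the target $B$ seems to place $\Omega(B)$ only among the \dce reals: over the course of the approximation, elements get removed from $B$ (lowering the measure) while new halting computations appear (raising it), so there is no obvious reason for left-approximability. The content of \cite{omegax} is that for a nonempty $\Pi^0_1$ target these two effects can always be rebalanced so that $\Omega(B)$ is in fact \lce, and moreover \ml random. For Theorem~\ref{6FcQ8QWaaZ} one merely instantiates that machinery at the particular $\Pi^0_1$ set of undecidable sentences.
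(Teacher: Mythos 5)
Your proposal matches the paper's own argument: the paper likewise observes that the set $B$ of undecidable sentences is a nonempty \pz set of strings (nonempty by G\"{o}del's incompleteness theorem) and then directly invokes the result of \cite{omegax} that $\Omega(B)$ is \lce and \ml random for any nonempty \pz set $B$. Your additional verification that $B$ is \pz (as the conjunction of two co-c.e.\ conditions via the computable map $\varphi\mapsto\neg\varphi$) is exactly the routine check the paper leaves implicit, so there is nothing to correct.
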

The contrast here is that, although we cannot effectively enumerate the undecidable sentences,
we can enumerate the left Dedekind cut of the probability of an undecidable output, just like
we did for the probability of a theorem of Peano arithmetic  (the latter being computably enumerable).
If we recall the characterization of universal halting probabilities as the \ml random \lce reals, then
it follows that the probability of an undecidable sentence with respect to any universal machine $U$
is equal to the probability of a theorem of Peano arithmetic with respect to another universal machine $V$.

We can continue this discussion briefly by considering the probability that the randomized universal
machine will output a  true arithmetical sentence. 
\BFGM \cite{jsyml/BecherFGM06} showed that the
halting probability restricted to any $\Sigma^0_n$-complete set is \ml random\footnote{It is worth 
noting that Kobayashi \cite{ipl/Kobayashi93} had obtained a weaker version of this result for
a restricted notion of $\Sigma^0_n$-completeness which he called {\em constant overhead completeness}.
We also note that the result in \BFGM \cite{jsyml/BecherFGM06} applies for optimal machines in general,
not just universal machines.} and not $n$-random. Since the set of 
$n$-quantifier arithmetical sentences is an example of a $\Sigma^0_n$-complete set, we have the 
following.
\begin{thm}[\BFGM \cite{jsyml/BecherFGM06}]\label{DAJRioO5S}
For each $n>1$, the probability that the randomized universal machine outputs a true 
$n$-quantifier arithmetical sentence is \ml random but not $n$-random.
\end{thm}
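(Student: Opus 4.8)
The result is due to \BFGM, and the plan is to obtain it by combining their theorem quoted just above --- that the halting probability of the universal machine restricted to \emph{any} $\Sigma^0_n$-complete set of strings is \ml random but not $n$-random --- with a classical completeness computation. I would fix the effective identification of binary strings with arithmetical sentences already used in Section~\ref{vlfBnKCiKF}, and let $T_n$ denote the set of (codes of) true $n$-quantifier arithmetical sentences, understood in the usual $\Sigma^0_n$ prenex form. There are then two things to verify: (i) $T_n$ is $\Sigma^0_n$-complete under many-one reducibility; and (ii) the probability that the randomized universal machine outputs a true $n$-quantifier sentence is exactly the restricted halting probability $\Omega(T_n)$, so that the \BFGM result applies verbatim.

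For (i), the membership bound is the standard quantifier decomposition: a $\Sigma^0_n$ prenex sentence has the form $\exists x\,\theta(x)$ with $\theta$ a $\Pi^0_{n-1}$ formula, and by Post's theorem the truth of $\Pi^0_{n-1}$ (equivalently $\Sigma^0_{n-1}$) sentences is uniformly decidable from the oracle $\emptyset^{(n-1)}$; hence membership in $T_n$ is $\Sigma^0_1$ relative to $\emptyset^{(n-1)}$, i.e.\ $\Sigma^0_n$. For completeness I would use that every $\Sigma^0_n$ set $A\subseteq\Nat$ is the extension of some $\Sigma^0_n$ arithmetical formula $\varphi(v)$, so that $m\mapsto\ulcorner\varphi(\overline{m})\urcorner$ is a computable many-one reduction of $A$ to $T_n$. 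Part (ii) is immediate from the coding: ``$U$ halts with output coding a member of $T_n$'' is precisely the event ``$U$ outputs a true $n$-quantifier sentence''.

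Granting (i) and (ii), applying the \BFGM theorem to the $\Sigma^0_n$-complete set $T_n$ yields at once that $\Omega(T_n)$ is \ml random but not $n$-random, which is the assertion. In this argument the only genuinely hard ingredient is the one we are permitted to assume --- the \BFGM analysis that restricting a universal halting probability to a $\Sigma^0_n$-complete set preserves \ml randomness while destroying $n$-randomness; on our side nothing goes beyond the routine completeness computation, where the only care required is the bookkeeping of the string-to-sentence coding and the uniformity in $n$ of the $\emptyset^{(n-1)}$-decision procedure for $\Sigma^0_{n-1}$-truth.
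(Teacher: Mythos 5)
Your proposal is correct and follows exactly the paper's route: the paper also derives the theorem by applying the \BFGM result on halting probabilities restricted to $\Sigma^0_n$-complete sets to the set of true $n$-quantifier arithmetical sentences, which it observes is $\Sigma^0_n$-complete. The extra detail you supply (the Post's theorem membership bound and the many-one reduction via $m\mapsto\ulcorner\varphi(\overline{m})\urcorner$) just fills in the completeness claim the paper takes as standard.
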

Two questions present themselves at this point. First,
what about the probability that  the randomized universal machine outputs a true
arithmetical sentence (of any number of quantifiers)? Is it algorithmically random, and if yes,
how much? Second, is the probability in Theorem \ref{DAJRioO5S} \ml random with respect to
$\emptyset'$ or $\emptyset''$
for sufficiently large $n$? These questions do not seem to be amenable to the analysis in
\cite{jsyml/BecherFGM06}.
We leave them open for consideration, and  move on to
our main topic.

This discussion is interesting as a continuation of Chaitin's metamathematical considerations,
but its main purpose is to provide some additional context for our main result, Theorem \ref{DEdI4V18tW}.
In particular, Theorem \ref{6FcQ8QWaaZ} shows that even in situations where
the maximality paradigm of Section \ref{doMNBqznv} does not apply (\eg when the property in question
is definable by the disjunction of a $\Sigma^0_n$ and a $\Pi^0_n$ formula) it could still be the 
case that its universal probability is {\em necessarily} maximally random (in the case of a
$\Sigma^0_n\vee\Pi^0_n$ formula this means $n$-random).\footnote{Theorem \ref{6FcQ8QWaaZ}
is an example of this phenomenon for $n=1$, while a relativization 
of it produces examples for any $n$.}
Hence Theorem \ref{DEdI4V18tW} does not have a precedent in this line of research,
and all these examples reveal the wide variety of algorithmic behavior that the probability
of a property of the universal machine can have.

\section{Proof of Theorem \ref{DEdI4V18tW}}\label{Md9g6AfPbo}
First, in Section \ref{wqAZ3YkoA}, we show the first part of  Theorem \ref{DEdI4V18tW}, namely 
that we can express the universal probability of a computable output as the difference of two
\lcep reals. Section \ref{M2evXEjHpf} contains the first step towards the converse of our main result.
We show that  the probability of a computable output  can be any \lcep or \rcep real provided
that we choose an appropriate machine. Note however that we do not yet guarantee this 
for a universal Turing machine. The first step toward the latter conclusion is made in
Lemma \ref{INsGDxV55i} of the next section, which shows that the universal 
probability of a computable output can be chosen to be any \ml random \lcep real.
We conclude Section \ref{fuoAC5uTeq} with the proof of the latter part of Theorem 
 \ref{DEdI4V18tW}, in an argument that uses all the previous lemmas.
 
\subsection{The probability of a computable output as the difference of \texorpdfstring{\lcep}{0'-l.c.e.}reals}\label{wqAZ3YkoA}
We are interested in the measure of the oracles $X$ such that
the output $M(X)$ is total and computable.
Let $\tot(M)$ denote the $\Pi^0_2$ class of streams $X$ such that $M(X)$ is total.
Also let $\inctot(M)$ be the class of streams $X$ such that $M(X)$ is total and incomputable.
Then we are interested in the measure of the class
\begin{equation}\label{DBiC4T8TdE}
\tot(M)-\inctot(M).
\end{equation}
We note that $\inctot(M)$ is not generally definable with 2 quantifiers, so that the above class
appears more complex than we would ideally wish. Despite this, we show that the measure of this 
class can be expressed as the difference of two \lcep reals.

Recall that an oracle $X$ is a base for \ml randomness if it is computed by another oracle
which is \ml random relative to $X$. Also note that by
\HNF \cite{MR2352724} bases for \ml randomness are computable from the halting problem.
On the other hand, 2-random streams do not compute any incomputable set which is also computable
from the halting problem. This latter statement follows from the fact that the class of oracles that compute
a $\Delta^0_2$ set $X$ form a $\Sigma^0_3$ class, which is null when $X$ is noncomputable 
(by \cite{deleeuw1955}, also see \cite{Downey.Nies.ea:06}). These facts will be used in the proof
of the following lemma.
 
\begin{lem}\label{wi7Kr71TMX}
The probability that an oracle Turing machine $M$ produces a computable output when
reading from a random oracle has the form $\alpha-\beta$, where $\alpha,\beta$ are
\lcep reals in $(0,1)$.
\end{lem}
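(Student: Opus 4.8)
The plan is to express the measure of the class $\tot(M)-\inctot(M)$ from \eqref{DBiC4T8TdE} as a difference $\alpha-\beta$ where $\alpha=\mu(\tot(M))$ and $\beta$ is the measure of the part of $\tot(M)$ that we are subtracting off; the work is entirely in showing that both pieces are \lcep reals. For $\alpha$ this is immediate: $\tot(M)$ is a $\Pi^0_2$ class, hence its complement is $\Sigma^0_2$, so $1-\alpha$ is a \lcep real and therefore $\alpha$ is a \rcep real --- but we actually want $\alpha$ to be \lcep. The cleaner route is to observe that $\tot(M)$, being $\Pi^0_2$, is in particular a $\Pi^0_2$ (equivalently $G_\delta$) class whose measure is $\mathbf{0}'$-right-c.e.; we then absorb the orientation by writing the target measure as $(1-\beta')-(1-\alpha')$ for suitable \lcep reals, or more simply by noting the class of differences of \lcep reals is closed under the operations we need (the d.c.e.\ remark). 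So the real content is the middle term.

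The key idea for handling $\inctot(M)$ is exactly the two facts quoted just before the lemma. On the one hand, $\inctot(M)$ consists of oracles $X$ with $M(X)$ total and incomputable; such an $X$ need not be a base for \ml randomness, so I cannot directly invoke \HNF. Instead I would split according to whether $X$ is itself 2-random. If $X$ is 2-random, then by the de Leeuw--Moore--Shannon--Shapiro style observation (via \cite{deleeuw1955}, \cite{Downey.Nies.ea:06}) $X$ computes no incomputable $\Delta^0_2$ set; but if $M(X)$ is an incomputable total stream computed by $X$, then $M(X)$ is a $\Delta^0_2(X)$-ish object --- the crucial point is that the relevant subclass is null because 2-random oracles avoid computing incomputable sets that are "simple relative to the halting problem." So the intersection of $\inctot(M)$ with the 2-random oracles is null, and since the non-2-random oracles also form a null set, one concludes $\mu(\inctot(M))=0$?? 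That cannot be right in general, so the correct move must be subtler: one replaces $\inctot(M)$ not by a null set but shows that $\tot(M)\setminus\inctot(M)$ --- the computable-output part --- differs from a $\Sigma^0_2 \vee \Pi^0_2$-definable class by a null set, using that the oracles producing a computable output are, modulo the base-for-randomness machinery, capturable at the $\Delta^0_3$ level in a way whose measure is \lcep.

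Concretely, the plan is: (1) Let $\mathcal{C}$ be the class of $X$ with $M(X)$ total and computable. Show $\mathcal{C} = \tot(M) \setminus \inctot(M)$ and that, because a base for \ml randomness is $\le_T \emptyset'$ (by \HNF \cite{MR2352724}) while 2-random oracles compute no incomputable $\Delta^0_2$ set (by \cite{deleeuw1955}, \cite{Downey.Nies.ea:06}), every 2-random $X$ with $M(X)$ total already has $M(X)$ computable --- because $M(X)\le_T X$ and $X$ being 2-random forces $M(X)$ to be either computable or not $\Delta^0_2$, and a separate argument rules out the latter for total outputs from a fixed machine on a 2-random oracle. (2) Deduce $\mathcal{C}$ and $\tot(M)$ agree up to the null set of non-2-random oracles intersected with the bad part, so $\mu(\mathcal{C})=\mu(\tot(M))-\mu(\mathcal{D})$ where $\mathcal{D}=\inctot(M)$ is now seen to have a measure that is itself a \lcep real because $\mathcal{D}$ lies inside the non-2-random oracles and can be covered by a $\Sigma^0_2(\emptyset')$ set of the same measure. (3) Conclude $\mu(\mathcal{C})=\alpha-\beta$ with $\alpha=\mu(\tot(M))$ \rcep (hence its complement \lcep) and $\beta$ \lcep, then re-balance orientations using the d.c.e.\ field structure so that both are \lcep reals in $(0,1)$ as claimed. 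The main obstacle --- and the step I would spend the most care on --- is step (1)--(2): precisely pinning down why the incomputable-total-output oracles can be enclosed in a $\Sigma^0_2$-relative-to-$\emptyset'$ set of exactly the same measure, which is where the combination of the base-for-randomness theorem and the nullity of the cone-avoiding 2-randoms has to be orchestrated carefully rather than quoted directly.
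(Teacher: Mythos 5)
Your proposal correctly identifies the two external ingredients (the Hirschfeldt--Nies--Stephan base-for-\ml randomness theorem and the fact that 2-random oracles compute no incomputable $\Delta^0_2$ set) and correctly handles the bookkeeping between \lcep and \rcep reals, but the central step goes wrong. In step (1) you assert that every 2-random $X$ with $M(X)$ total already has $M(X)$ computable, claiming a ``separate argument'' rules out $M(X)$ being incomputable and non-$\Delta^0_2$. This is false: take $M$ to be the identity machine, so that $M(X)=X$ is total and incomputable for every 2-random $X$; then $\inctot(M)$ has measure $1$, not $0$, and it certainly does not lie inside the non-2-random oracles as your step (2) asserts. Nothing forces the output of a machine on a 2-random oracle to be $\Delta^0_2$; the HNS theorem only applies when $X$ is \ml random \emph{relative to} $M(X)$, and that is precisely the case you need to isolate rather than rule out.

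The missing idea is the surrogate class. The paper replaces $\inctot(M)$ by
\[
\inctot^{\ast}(M)=\tot(M)\cap\{X\ |\ X\in \textstyle\bigcap_i V^{M(X)}_i\},
\]
the class of $X$ with $M(X)$ total and $X$ \emph{not} \ml random relative to $M(X)$, where $(V_i)$ is a universal \ml test. This is a $\Pi^0_2$ class, and it agrees with $\inctot(M)$ on every 2-random $X$: if $M(X)$ is total and incomputable but $X$ were \ml random relative to $M(X)$, then $M(X)$ would be a base for \ml randomness, hence $\Delta^0_2$ by HNS, contradicting that no 2-random computes an incomputable $\Delta^0_2$ set; conversely, if $M(X)$ is total and $X$ is not \ml random relative to $M(X)$, then $M(X)$ cannot be computable, since a 2-random $X$ is \ml random relative to any computable oracle. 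Since the 2-randoms have full measure, $\mu(\inctot(M))=\mu(\inctot^{\ast}(M))$, so the desired probability equals $\mu(\tot(M))-\mu(\inctot^{\ast}(M))$, a difference of two \rcep reals, which your own rebalancing trick converts into a difference of \lcep reals in $(0,1)$. Your proposal circles around this (you mention $\Sigma^0_2\vee\Pi^0_2$ classes and capturing things ``at the $\Delta^0_3$ level'') but never produces the $\Pi^0_2$ class of the same measure, and that single definition is the entire content of the lemma.
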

\begin{proof}
If $\alpha,\beta$ are \lcep reals in $(0,1)$, let $q\in (\max\{\alpha,\beta\}, 1)$ be a rational number
and note that
\[
\alpha-\beta= 
-\beta - (-\alpha)=
(q-\beta) - (q-\alpha).
\]
Hence the difference of two \lcep reals in $(0,1)$ can be written as the difference of two
\rcep reals in $(0,1)$.
Hence it suffices to show that the measure of \eqref{DBiC4T8TdE} 
can be written as the difference of two
\rcep reals in $(0,1)$.

Note that while $\tot(M)$ is a $\Pi^0_2$ class,
$\inctot(M)$  is a $\Pi^0_3$ class, so we need to find a simpler class which has the same
measure as $\inctot(M)$. Let $(V_i)$ be a universal \ml test and define
\[
\inctot^{\ast}(M)=\tot(M)\cap\{X\ |\ X\in \cap_i V^{M(X)}_i\}.
\]
This is the class of streams $X$ such that $M(X)$ is total and $X$ is not 
\ml random relative to $M(X)$.
Observe that $\inctot^{\ast}(M)$ is a $\Pi^0_2$ class.
If we show that
\begin{equation}\label{xFQqbEWMud}
\textrm{for every 2-random $X$ we have 
$X\in\inctot(M)\Leftrightarrow X\in\inctot^{\ast}(M)$} 
\end{equation}
then $\mu(\inctot(M))=\mu(\inctot^{\ast}(M))$
which means that
\[
\mu\big(\tot(M)\big)-\mu\big(\inctot^{\ast}(M)\big)
\]
is the measure of \eqref{DBiC4T8TdE}.
Since the above two classes are $\Pi^0_2$, their measures are \rcep reals, so this
proves the statement. It remains to prove \eqref{xFQqbEWMud}.

First, suppose that $X$ is 2-random and $X\in\inctot$. 
Then by definition $M(X)$ is total and incomputable. If
$X\not\in \cap_i V^{M(X)}_i$ then $X$ would be \ml random relative to the incomputable set $M(X)$
which it computes, so $M(X)$ would be a base for \ml randomness.  But we know from 
\cite{MR2352724} that such sets
are $\Delta^0_2$ and we also know that no 2-random real computes a noncomputable
$\Delta^0_2$ set. Hence we arrived at a contradiction, which means that 
$X\in \cap_i V^{M(X)}_i$.

Conversely, suppose that $X$ is 2-random, $M(X)$ is total and $X\in \cap_i V^{M(X)}_i$. Then
$M(X)$ must be incomputable, otherwise $X$ would not be 2-random (or even 1-random).
So $X$ belongs to $\inctot$.
This concludes the proof of \eqref{xFQqbEWMud} and the proof of the lemma.
\end{proof}

\subsection{A Turing machine for each \texorpdfstring{\lcep}{left 0'-c.e.}and 
each \texorpdfstring{\rcep}{right 0'-c.e.}real}\label{M2evXEjHpf}
Here we make the first step towards the proof of the second part of Theorem
\ref{DEdI4V18tW}. 
\begin{defi}[Weight of a set of strings]
The weight of a \pf set of strings $S$ is $\sum_{\sigma\in S} 2^{-|\sigma|}$.
\end{defi}

We wish to show that given any $\alpha$ which is
either a  \lcep or a \rcep real, there exists an oracle Turing machine $M$ such that
the measure of $\tot(M)-\inctot(M)$ is $\alpha$.
In order to do this we use the fact that 
every  \lcep real is the weight of
a $\Sigma^0_2$ \pf set of strings.\footnote{A proof of this fact can be found in 
\cite[Section 2.1]{ranaspro}.}
Given such a set of strings, we wish to produce a special oracle Turing machine $M$
that has the desired property (namely $M(X)$ being total and computable)
exactly on streams that have (alternatively those which do not have) a prefix in the
$\Sigma^0_2$ \pf set of strings. Although its not possible to achieve this in both cases, 
we will be successful almost everywhere, in the probabilistic sense, which is sufficient for our purposes.

We need the following fact.
\begin{lem}[Canonical $\Sigma^0_2$ approximations]\label{SBZGB7jnF2}
If $U$ is a $\Sigma^0_2$ \pf set of strings,
there exists
a computable sequence $(V_s)$ of finite \pf sets
of strings  such that
\begin{enumerate}[(i)]
\item for each $\sigma$ we have $\sigma\in U$ if and only if there exists $s_0$ 
such that $\sigma$ has a prefix in all $V_s$, $s>s_0$;
\item there are infinitely many $s$ such that $V_s\subseteq U$.
\end{enumerate}
We call $(V_s)$  a {\em canonical $\Sigma^0_2$ approximation to $U$}. 
\end{lem}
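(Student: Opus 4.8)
The statement is the standard ``$\Sigma^0_2$ canonical approximation'' lemma, so the plan is to put $U$ into a $\Sigma^0_2$ normal form, decompose it into a well-behaved increasing union of co-c.e.\ \pf sets, and then extract $(V_s)$ by a round-robin construction that revisits each piece infinitely often with built-in delays. First I would fix a computable predicate $R$ with $\sigma\in U$ iff $\exists m\,\forall k\, R(\sigma,m,k)$, and for each $m$ set $E_m:=\{\sigma:\exists m'\le m\ \forall k\ R(\sigma,m',k)\}$. Then $(E_m)_m$ is increasing with $\bigcup_mE_m=U$, each $E_m$ is co-c.e.\ (a finite disjunction of $\Pi^0_1$ conditions) and \pf because $E_m\subseteq U$. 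Since $E_m$ is co-c.e., fix a uniformly computable approximation in which $\sigma$, once $s>|\sigma|$, can only pass from ``in'' to ``out'' (so it leaves at most once): put $E_{m,s}=1$ iff $|\sigma|\le s$ and not all of the witnesses $m'\le m$ have been seen to fail for $\sigma$ by stage $s$. This is monotone in $m$, and for each fixed length $\ell$ the set $\{\sigma:|\sigma|\le\ell,\ E_{m,s}=1\}$ is nonincreasing in $s$ for $s\ge\ell$, hence eventually equal to $E_m\cap 2^{\le\ell}$.

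\textbf{Construction of $(V_s)$.} Enumerate $\Nat\times\Nat$ so that every $m$ is visited infinitely often, with the gaps between consecutive visits to a fixed $m$ at most doubling. At the stage $s$ carrying the $j$-th visit to $m$, let $\ell_s$ be the largest $\ell\le j$ such that $\{\sigma:|\sigma|\le\ell,\ E_{m,\cdot}=1\}$ has been constant over the last $h(j)$ visits to $m$ (for a suitable slowly growing computable $h$ with $j-h(j)\to\infty$), let $V_s^0$ be the \pf pruning (the $\preceq$-minimal elements) of $\{\sigma:|\sigma|\le\ell_s,\ E_{m,s}=1\}$, and finally smooth by $V_s:=\bigcap_{s\le t\le 2s}V_t^0$; an intersection of \pf sets is \pf, so every $V_s$ is a finite \pf set, computably in $s$.

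\textbf{Verification of (i).} If $\sigma\in U$ then $\sigma\in E_m$ for some $m$; on the $m$-thread the membership bit of each proper prefix of $\sigma$ (none of which lie in $E_m$, by prefix-freeness) switches off permanently after finitely many visits, and the length-$\le|\sigma|$ part of $E_m$ settles, so for all large visits $\sigma\in E_{m,s}=1$, $\ell_s\ge|\sigma|$, and $\sigma$ survives pruning; hence $\sigma\in V^0_t$ for all large $t$, so $\sigma\in V_s$ for all large $s$. Conversely, if $\sigma\notin U$ then $\sigma\notin E_m$ for every $m$, so $\sigma\in E_{m,s}=0$ cofinally on every thread; since every thread is revisited infinitely often with at most doubling gaps, each window $[s,2s]$ with $s$ large contains a visit at which $\sigma$ has dropped out, so $\sigma\notin V_s$ infinitely often. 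The intended payoff for (ii) is that for fixed $m$ and fixed $\ell$ the snapshots are eventually exact, so whenever the cut-off $\ell_s$ lags far enough behind the settling stages, $V^0_s$ is literally $E_m\cap 2^{\le\ell_s}$, a genuine finite \pf subset of $E_m\subseteq U$, and then $V_s\subseteq V_s^0\subseteq U$; since $\bigcup_mE_m=U$ this yields infinitely many $s$ with $V_s\subseteq U$.

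\textbf{Main obstacle.} The hard part will be exactly getting (i), (ii) and prefix-freeness to hold at once. The naive eager $\Sigma^0_2$ approximation of $U$ carries arbitrarily long plateaus of spurious strings --- strings that look like members of $U$ for a long time without being members --- and this is damaging in two independent ways: a spurious short string can permanently shadow a genuine longer one inside a \pf set, and since membership in $U$ is genuinely $\Sigma^0_2$ one can never finitely certify $\sigma\in U$, so (ii) cannot be obtained by simply enumerating subsets of $U$. The length cut-off $\ell_s$ (tied to how long the length-bounded part has been stable) and the windowed intersection are there precisely to defeat this --- the cut-off ensures that whatever we emit of a given length has been stable long enough to be correct cofinally often, while the intersection expels spurious strings infinitely often even when they persist for a while. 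Calibrating the round-robin schedule, the delay $h$, and $\ell_s$ so that honest snapshots surface infinitely often \emph{and} no length-bounded plateau is long enough to fool the cut-off at the ``overshoot'' lengths is the delicate bookkeeping; everything else in the argument is routine.
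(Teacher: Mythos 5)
Your route is genuinely different from the paper's, and it has two concrete gaps, one of which I think is fatal to the whole framework rather than a matter of calibration. The paper does not use a $\Sigma^0_2$ normal form at all: it writes $U=W^{\emptyset'}$ for a c.e.\ operator $W$, runs a computable enumeration $(\emptyset'_s)$, and applies Lachlan's hat trick, discarding any computation whose oracle use exceeds the most recently enumerated number. At the infinitely many \emph{true} stages (which one cannot recognize effectively but which provably exist), every surviving computation reads a correct segment of $\emptyset'$, so everything currently enumerated really lies in $U$; that is how (ii) is obtained. The subset property there comes from the positive, oracle-certified nature of c.e.-operator computations. Your decomposition $U=\bigcup_m E_m$ into increasing $\Pi^0_1$ sets only ever yields approximations \emph{from above}: a snapshot $E_{m,s}\cap 2^{\le\ell}$ is a subset of $U$ exactly when it has fully converged at length $\ell$, and convergence of a $\Pi^0_1$ approximation cannot be certified by any finite stability test. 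Worse, the settling function $(m,\ell)\mapsto$ (least stage by which $E_{m,\cdot}\cap 2^{\le\ell}$ has converged) can dominate every computable function, so no computable choice of schedule, delay $h$, and cut-offs with $\ell_s\to\infty$ (which condition (i) forces) can guarantee that you land on a converged snapshot even infinitely often; your ``largest $\ell$ stable for $h(j)$ visits'' rule can be made to overshoot the converged length at every single stage by planting spurious strings with ever-longer plateaus. So the step you defer as ``delicate bookkeeping'' is precisely where the argument breaks, and I do not see how to repair it without importing the hat trick (or some equivalent true-stage mechanism).

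Independently, the windowed intersection $V_s=\bigcap_{s\le t\le 2s}V_t^0$ damages condition (i). The sets $E_m$ increase with $m$ and $U$ is prefix-free, so a string $\sigma\in E_m\setminus E_{m-1}$ has no prefix in any $E_{m'}$ with $m'<m$. Hence at settled stages $t$ devoted to a thread $m'<m$, and at stages devoted to a freshly started thread whose cut-off $\ell_t$ is still below $|\sigma|$, the set $V_t^0$ contains no prefix of $\sigma$; every window $[s,2s]$ with $s$ large contains such stages, so $\sigma$ has no prefix in $V_s$ for all large $s$, contradicting the forward direction of (i). Your verification slides from ``$\sigma\in V_t^0$ for all large visits to the $m$-thread'' to ``$\sigma\in V_t^0$ for all large $t$,'' which is exactly the step that fails. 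By contrast, the paper's single sequence $U_s=\widehat{W}^{\emptyset'_s}_s$ (with $V_s$ its minimal strings) approximates all of $U$ at once, so genuine elements appear permanently once their computations stabilize, and no cross-thread intersection is needed.
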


\begin{proof}
Given $U$,
there exists a \ce operator $W$ such that $W^{\emptyset'}=U$. 
We can modify the enumeration of $W$ (obtaining a modified $\widehat{W}$) 
with respect to a computable enumeration
$(\emptyset'_s)$ of $\emptyset'$, so that if $n\in \emptyset'_{s+1}-\emptyset'_s$ for some number $n$ 
and stage $s$, any number $m$ which is in $W^{\emptyset'_{s+1}}_{s+1}$ with oracle use above $n$
is not counted in $\widehat{W}^{\emptyset'_{s+1}}_{s+1}$. In other words, the enumeration of 
$\widehat{W}^{\emptyset'}$ follows the enumeration of  $W^{\emptyset'}$ except that it delays the enumeration
of certain numbers until a stage where the associated segment of $\emptyset'$ remains stable between the
current and the previous stages.\footnote{This is a standard technique which is known as the
{\em hat-trick} (originally due to Lachlan) and is applied to functionals and \ce operators relative to \ce sets 
(see Odifreddi \cite[Section X.3]{Odifreddi:99} for an extended discussion on this method).}
Let $U_s=\widehat{W}^{\emptyset'_{s}}_{s}$
and let $V_s$ contain the minimal strings in $U_s$. 
Clearly all $V_s$ are finite and \pf sets
of strings. 
A {\em true} enumeration into $\emptyset'$ is 
an enumeration of a number $n$ into $\emptyset'$ at stage $s$
such that $\emptyset'_{s}\restr_n=\emptyset'\restr_n$. A {\em true stage} is a stage $s$ at which 
a true enumeration occurs. Clearly there exist infinitely many true enumerations and stages. 
By the choice of $\widehat{W}$ we have that if $s$ is a true stage then  
$U_s\subseteq U$ and
since $U$ is \pf so is $U_s$ and thus we have $V_s=U_s\subseteq U$.
Moreover, if $\sigma\in U$ then $\sigma\in U_s$ for 
all but finitely many stages $s$, which means that $\sigma$ has a prefix in $V_s$ for 
all but finitely many $s$. Finally, by the hat trick, if
$\sigma$ has a prefix in $V_s$ for all but finitely many stages $s$, 
we necessarily have $\sigma\in U$,
because $V_t\subseteq U$ for infinitely many stages $t$.
\end{proof}

\begin{lem}\label{T9RUBd4GeX}
Suppose that $Q$ is a \szt  \pf set of strings. 
Then there exists an oracle Turing machine $M$
such that $M(X)$ is total for all $X$, and,  
for every \ml random real $X$, the following clauses are equivalent: 
\begin{enumerate}[\hspace{0.5cm}(a)]
\item $M(X)$ is total and computable;
\item $M(X)$ is total  and $M(X)=\eta\ast 0^{\omega}$ for some string $\eta$;
\item $X\in\dbra{Q}$;
\end{enumerate}
\end{lem}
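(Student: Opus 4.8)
The plan is to let $M$, on oracle $X$, try to copy $X$ bit by bit onto the output tape — encoding each bit $X(n)$ as the short block $1^{X(n)+1}0$, so that the copied bits can always be read back — while continually monitoring a canonical $\Sigma^0_2$ approximation $(V_s)$ to $Q$, as provided by Lemma \ref{SBZGB7jnF2}. Concretely, $M$ maintains a copy position $n$ (starting at $0$) and an output string $O$ (starting empty); at stage $s$ it computes $V_s$ and, if the already-read initial segment $X\restr n$ has a prefix in $V_s$, it appends a single $0$ to $O$ and reads nothing new, while otherwise it reads the next bit $X(n)$, appends $1^{X(n)+1}0$ to $O$, and increments $n$; the output $M(X)$ is the limit of the strings $O$. (This is most naturally viewed as a monotone machine.) Since $|O|$ strictly increases at every stage, $M(X)$ is total for every oracle $X$.

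I would then check the cycle (c)$\Rightarrow$(b)$\Rightarrow$(a)$\Rightarrow$(c), noting that (b)$\Rightarrow$(a) is immediate since $\eta\ast 0^\omega$ is total and computable. For (c)$\Rightarrow$(b), suppose $\sigma\preceq X$ with $\sigma\in Q$ and $m=|\sigma|$. The copy position cannot stall at any $n_0<m$: a permanent stall at $n_0$ would force, by property (i) of Lemma \ref{SBZGB7jnF2}, that $X\restr{n_0}\in Q$, impossible since $X\restr{n_0}$ is a proper prefix of $\sigma$ and $Q$ is prefix-free. So the copy position reaches $m$; and once $n\ge m$ we have $\sigma\preceq X\restr n$, so at every stage $s$ large enough that $\sigma$ has a prefix in $V_s$ (which, by (i), is all sufficiently large $s$) the segment $X\restr n$ also has a prefix in $V_s$ and $M$ does not advance. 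Hence the copy position stabilizes at some finite $n^\ast$, after which $M$ appends only $0$'s, so $M(X)=\eta\ast 0^\omega$.

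For (a)$\Rightarrow$(c) I would argue the contrapositive. Assume $X\notin\dbra{Q}$, so no prefix of $X$ lies in $Q$. Here property (ii) of Lemma \ref{SBZGB7jnF2} is the crucial point: at each of the infinitely many stages $s$ with $V_s\subseteq Q$, the segment $X\restr n$ has \emph{no} prefix in $V_s$ (such a prefix would be a prefix of $X$ lying in $Q$), so $M$ advances at every such stage and the copy position tends to infinity. Consequently every block $1^{X(n)+1}0$ is eventually written, and $M(X)=0^{a_0}1^{X(0)+1}0\,0^{a_1}1^{X(1)+1}0\,\cdots$ with each $a_n$ finite. The maximal runs of $1$'s in $M(X)$ are then exactly the $1$-parts of these blocks, in order, so $X(n)$ equals the length of the $n$-th maximal run of $1$'s in $M(X)$ minus $1$; hence $X\le_T M(X)$, and so $M(X)$ computable implies $X$ computable. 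Since a \ml random real is not computable, for every \ml random $X$ we get $X\notin\dbra{Q}\Rightarrow M(X)$ incomputable, i.e.\ (a)$\Rightarrow$(c).

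The step I expect to be the real obstacle is exactly the appeal to property (ii). Without the infinitely many ``true stages'' $V_s\subseteq Q$, the event ``$X\restr n$ has a prefix in $V_s$'' could conceivably persist at every stage for some $X\notin\dbra{Q}$ — the witnessing member of $V_s$ drifting upward along $X$ without any fixed prefix of $X$ ever entering $Q$ — which would freeze the copy position and make $M(X)$ eventually constant, breaking (a)$\Rightarrow$(c). Making sure the canonical approximation of Lemma \ref{SBZGB7jnF2} is used precisely so as to rule this out — and, symmetrically, that property (i) pins down the stalling behaviour in (c)$\Rightarrow$(b) and guarantees that padding phases between consecutive blocks are always finite — is the heart of the matter; everything else is routine bookkeeping.
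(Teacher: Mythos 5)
Your construction is correct, and the overall skeleton matches the paper's: take a canonical $\Sigma^0_2$ approximation $(V_s)$ from Lemma \ref{SBZGB7jnF2}, emit at least one output bit per stage (so $M(X)$ is total for every $X$), pad with $0$s while the current initial segment of $X$ appears to have a prefix in $Q$, and otherwise emit information about $X$; your treatment of (c)$\Rightarrow$(b)$\Rightarrow$(a), including the use of property (i) to rule out a permanent stall strictly below a witness $\sigma\in Q$ and of property (ii) to force the copy position to infinity when $X\notin\dbra{Q}$, is exactly the right use of the two clauses of that lemma. Where you genuinely diverge is the key direction (a)$\Rightarrow$(c). The paper's machine copies the bits of $X$ into $M(X)$ at the \emph{same positions} (arranging $|M(\sigma)|=|\sigma|$ over dyadic blocks) and then argues measure-theoretically: if $M(X)=\varphi_e$ and $X\notin\dbra{Q}$, then $X$ agrees with $\varphi_e$ on blocks $[2^s,2^{s+1})$ for infinitely many $s$, and the classes of such $X$ form a \ml test $(V_{s+1}(e))$ of measure $\leq 2^{-s}$, so a \ml random $X$ cannot do this. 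You instead encode $X$ into $M(X)$ self-delimitingly via the unary blocks $1^{X(n)+1}0$, so that when the copy position tends to infinity one gets $X\leq_T M(X)$ outright, and computability of $M(X)$ is refuted from mere noncomputability of $X$. Your route is more elementary (no test construction) and slightly stronger: it gives (a)$\Leftrightarrow$(c) for every noncomputable oracle and (b)$\Leftrightarrow$(c) for \emph{all} oracles, whereas the paper's version needs \ml randomness even for (b)$\Leftrightarrow$(c) (e.g.\ an oracle of the form $\tau\ast 0^{\omega}$ outside $\dbra{Q}$ still yields an output ending in $0^\omega$ under the paper's copying scheme). Both versions support Corollaries \ref{nyRrTKweTO} and \ref{L4IxWvmBj2} and Theorem \ref{OduFLoeqO} equally well, since only the measure of the relevant classes matters there.
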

\begin{proof}
Given a \szt  \pf set of strings
$Q$ we use Lemma
\ref{SBZGB7jnF2} and consider
a canonical \szt approximation $(Q_s)$ to $Q$.
Note that a string $\sigma$ belongs to $Q$ if and only if
it has a prefix in  $Q_s$ for almost all $s$. 
We will use $(Q_s)$ in order to build an oracle machine
$M$ with the required properties. We may assume that 
for each $s$, the strings in $Q_s$
have length less than $s$. 
We build $M$ as a monotone machine.
Let $\lambda$ denote the empty string. 
At stage 0, we define $M(\lambda) = \lambda$.  At stage $s+1$ we define
$M(\sigma)$ for all strings of length $\sum_{i\leq s} 2^i=2^{s+1}-1$.
Hence at stage 1, we define $M(\sigma)$ for all strings $\sigma$ of length 1, at stage 2, 
we define $M(\sigma)$ for all strings of length 3, and so on.
Moreover we ensure that the length  of $M(\sigma)$
is equal to the length of $\sigma$. Hence at stage $s+1$, for each string $\sigma$
of length $2^{s+1}-1$ with a prefix $\tau$  of length $2^{s}-1$ we need to determine 
$2^s$ additional bits which we can suffix to $M(\tau)$ in order to define $M(\sigma)$.

At stage $s+1$ do the following for each string $\sigma$ of length 
$2^{s+1}-1$ and its prefix $\tau$ of length $2^{s}-1$.
If there is a prefix of $\sigma$ in $Q_{s}$, then let $M(\sigma)=M(\tau)\ast 0^{2^s}$.
Otherwise let $M(\sigma)=M(\tau)\ast \rho$ where $\rho$ is the string such that
$\sigma\restr_{|M(\tau)|}\ast\rho=\sigma$, \ie the last $|\sigma|-|M(\tau)|$ many bits of
$\sigma$. This completes the construction of $M$.

Clearly $M$ is a monotone machine and $M(X)$ is total for all streams $X$.
Moreover by the construction and the properties of $(Q_s)$ we have that
\begin{equation*}
\parbox{13cm}{if $\sigma\in Q$ then for each $X$ extending $\sigma$
there exists a string $\rho$ such that $M(X)=\rho\ast 0^{\omega}$.}
\end{equation*}
Indeed, if $\sigma\in Q$ then there exists a stage $s_0>0$ such that $\sigma\in Q_s$ for all
$s\geq s_0$. In this case for each $X$ that has prefix $\sigma$ 
the construction gives that $M(X)=\rho\ast 0^{\omega}$ for some string 
$\rho$ of length $2^{s_0-1}-1$.

Recall that there are infinitely many $s$ such that $Q_s\subseteq Q$. Hence
if $X$ does not have a prefix in $Q$, there are infinitely many stages $s$ such that
$X$ does not have a prefix in $Q_s$.
Let $(\varphi_e)$ be a computable enumeration of all partial computable functions 
and for each $e,s$ let
$V_{s+1}(e)$ be
\begin{itemize}
\item the set of strings $\eta$ of length $2^{s+1}-1$ such that $\eta(i)=\varphi_e(i)$ for all
$i\in [2^{s}, 2^{s+1})$, if $\varphi_e\restr_{2^{s+1}}$ is defined;
\item the empty set, otherwise.
\end{itemize}
By the construction we have $\mu (V_{s+1}(e))\leq 2^{-s}$ for each $s,e$,
and $V_{s+1}(e)$ is uniformly \ce in $e,s$.
Hence $(V_{s+1}(e))$ is a \ml test for each $e$. Hence if $X$ is \ml random and $e\in\Nat$,
then $X$ has a prefix in $V_{s+1}(e)$ 
for only finitely many $s$.

Now we can show that if $X$ is \ml random and does not have a prefix in $Q$ then
$M(X)$ is not computable. Indeed, if $\varphi_e=M(X)$ then at each stage $s+1$
such that $X$ does not have a prefix in $Q_s$, the construction would define
the last $2^s$ many digits of $M(X\restr{2^{s+1}-1})$ to be
the last $2^s$ many digits of $X\restr{2^{s+1}-1}$, which means that the latter
coincides with the last $2^s$ many digits of $\varphi_e\restr_{2^{s+1}-1}$.
But this means that $X\in V_{s+1}(e)$.
Hence by the assumption that $\varphi_e=M(X)$ and $X\notin\dbra{Q}$
we deduce that $X\in\dbra{V_{s+1}(e)}$ for infinitely many $s$. Since 
$(V_{s+1}(e))$ is a \ml test  this means that $X$ is not \ml random.
We have shown that 
\begin{equation*}
\parbox{13cm}{if $X$ is \ml random and $X\not\in \dbra{Q}$ then $M(X)$ is
not computable.}
\end{equation*}
Hence for almost all $X$ in the complement of $\dbra{Q}$ we have that $M(X)$ is
noncomputable, while for all $X$ in $\dbra{Q}$ the image $M(X)$ is computable.
This concludes the proof of the lemma.
\end{proof}

By the Kraft-Chaitin theorem it follows 
(see \cite{ranaspro} for a detailed proof) that
\begin{equation}\label{PHPIxTuwCA}
\parbox{13cm}{given any $e\in\Nat$, any string $\rho$ of length $e$ and any
\lcep real $\alpha\in (0,1-2^{-e})$ there exists  a
\szt \pf set $S$ of strings which are incompatible with $\rho$
and $\mu(\dbra{S})=\alpha$.}
\end{equation}
Here incompatibility is with regard to the prefix relation: a string $\eta$ is incompatible
with a string $\rho$ if  $\eta\neq \rho$ and
$\eta$ is neither a prefix nor an extension of $\rho$ .
Lemma \ref{T9RUBd4GeX} in combination with \eqref{PHPIxTuwCA} 
implies the following item, which will be used in the
argument of Section \ref{fuoAC5uTeq}.

\begin{coro}\label{nyRrTKweTO}
If $e\in\Nat$, $\rho$ is a string of length $e$ and $\alpha$ is a \lcep real in $(0,1-2^{-e})$, 
then there exists an oracle Turing machine $M$  
such that $M(\sigma)$ is the empty string for any
string $\sigma$ which is compatible with $\rho$, and such that the probability
that $M(X)$ is computable is exactly $\alpha$.
\end{coro}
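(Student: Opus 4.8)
\emph{Proof proposal.} The plan is to feed the set furnished by \eqref{PHPIxTuwCA} into Lemma \ref{T9RUBd4GeX} and then override the resulting machine so that it emits the empty string on every string compatible with $\rho$.

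Concretely, I would first apply \eqref{PHPIxTuwCA} to the given $e$, $\rho$ and $\alpha\in(0,1-2^{-e})$ to obtain a \szt \pf set $S$, each of whose members is incompatible with $\rho$, with $\mu(\dbra{S})=\alpha$. Next I would apply Lemma \ref{T9RUBd4GeX} with $Q=S$, obtaining a monotone oracle machine $M_0$ such that $M_0(X)$ is total for all $X$ and, for every \ml random $X$, the output $M_0(X)$ is (total and) computable if and only if $X\in\dbra{S}$. Finally I define $M$ by $M(\sigma)=\lambda$ whenever $\sigma$ is compatible with $\rho$, and $M(\sigma)=M_0(\sigma)$ otherwise. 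This $M$ is computable since compatibility with the fixed string $\rho$ is decidable, and the first clause immediately supplies the required property that $M(\sigma)=\lambda$ for all $\sigma$ compatible with $\rho$.

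It then remains to check that $M$ is a (monotone) machine and to compute the measure. For monotonicity, if $\sigma\preceq\tau$ then a short case analysis (on whether $\tau\preceq\rho$ or $\rho\preceq\tau$) shows that one cannot have $\sigma$ incompatible with $\rho$ while $\tau$ is compatible with $\rho$; so the only case not handled outright by the first clause is $\sigma,\tau$ both incompatible with $\rho$, where $M(\sigma)\preceq M(\tau)$ is inherited from $M_0$. For the measure: if $\rho\preceq X$ then every prefix of $X$ is compatible with $\rho$, hence $M(X)=\lambda$, which is not total; and if $\rho\not\preceq X$ then $X$ and $\rho$ disagree at some position below $e$, so every sufficiently long prefix of $X$ is incompatible with $\rho$ and therefore $M(X)=M_0(X)$. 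Consequently, on the co-null set of \ml random streams, $M(X)$ is total and computable if and only if $X\in\dbra{S}$ — here one uses that $\dbra{S}$ is contained in the complement of $\dbra{\{\rho\}}$, since each string of $S$ is $\preceq$-incomparable with $\rho$. Hence the probability that $M(X)$ is total and computable equals $\mu(\dbra{S})=\alpha$, which is what we want.

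I do not expect a genuine obstacle here; the corollary is essentially a packaging of Lemma \ref{T9RUBd4GeX} and \eqref{PHPIxTuwCA}. The only point needing care is the observation that overriding $M_0$ to output $\lambda$ on all strings compatible with $\rho$ deletes exactly the cone $\dbra{\{\rho\}}$ from the domain of totality and leaves the behaviour of $M_0$ untouched on its complement — and this is guaranteed precisely by the incompatibility of $S$ with $\rho$ built into \eqref{PHPIxTuwCA} — so that the measure of the event ``$M(X)$ is total and computable'' remains $\alpha$.
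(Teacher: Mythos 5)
Your proposal is correct and is exactly the argument the paper intends: the paper derives Corollary \ref{nyRrTKweTO} by combining \eqref{PHPIxTuwCA} with Lemma \ref{T9RUBd4GeX} without spelling out the details, and your write-up supplies precisely those details (monotonicity of the patched machine, $\dbra{S}\cap\dbra{\{\rho\}}=\emptyset$, and the measure computation on the co-null set of \ml random oracles). No gaps.
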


Finally we want a corresponding statement for \rcep reals, which we can get from
an analogue to Lemma \ref{T9RUBd4GeX} for $\Pi^0_2$ sets. 
Note that this case is much simpler and gives a stronger result
(with the claimed equivalence being satisfied by {\em every} real), 
which we formulate as follows.
\begin{lem}\label{RyJmrfp9i1}
Suppose that $Q$ is a  \szt \pf set of strings. 
Then there exists an oracle Turing machine $M$
such that the following clauses are equivalent
\begin{enumerate}[\hspace{0.5cm}(a)]
\item $M(X)$ is total;
\item $M(X)$ is total and $M(X)=0^{\omega}$;
\item $X\not\in\dbra{Q}$;
\end{enumerate}
for each real $X$.
\end{lem}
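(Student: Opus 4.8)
The plan is to build a monotone machine $M$ that, on input a prefix $\sigma$ of a stream $X$, keeps outputting $0$'s as long as $X$ has not yet been seen to enter $\dbra{Q}$, and stops producing output (i.e.\ diverges, or rather stops extending the output) as soon as a prefix of $X$ appears in the approximation to $Q$. Since $Q$ is only $\Sigma^0_2$, I would use Lemma \ref{SBZGB7jnF2} to fix a canonical $\Sigma^0_2$ approximation $(Q_s)$ to $Q$: recall that $\sigma\in Q$ iff $\sigma$ has a prefix in $Q_s$ for almost all $s$, and that $Q_s\subseteq Q$ for infinitely many $s$. The machine $M$ is defined in stages exactly as in the proof of Lemma \ref{T9RUBd4GeX}: at stage $s+1$ we define $M(\sigma)$ for every string $\sigma$ of length $2^{s+1}-1$ with prefix $\tau$ of length $2^s-1$; if $\sigma$ has a prefix in $Q_s$, we set $M(\sigma)=M(\tau)$ (append nothing), and otherwise we set $M(\sigma)=M(\tau)\ast 0^{2^s}$.

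The three clauses are then checked by tracing through the behavior of $M$. For (c)$\Rightarrow$(b): if $X\not\in\dbra{Q}$ then in particular no prefix of $X$ lies in $Q$; combined with the fact that $Q_s\subseteq Q$ for infinitely many $s$, there are infinitely many stages $s$ at which $X\restr_{2^{s+1}-1}$ has no prefix in $Q_s$, so at each such stage $M$ appends a fresh block $0^{2^s}$, forcing $M(X)=0^\omega$. Conversely, (a)$\Rightarrow$(c): if $X\in\dbra{Q}$, fix $\sigma\preceq X$ with $\sigma\in Q$; then there is a stage $s_0$ beyond which $\sigma$ (hence $X\restr_{2^{s+1}-1}$, once it is long enough) has a prefix in $Q_s$ for all $s\geq s_0$, so from stage $s_0$ on $M$ appends nothing and the output tape stabilizes to a finite string — so $M(X)$ is not total. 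This gives $\neg(c)\Rightarrow\neg(a)$, i.e.\ (a)$\Rightarrow$(c). The implications (b)$\Rightarrow$(a) is trivial, so the three are equivalent for every $X$, with no randomness hypothesis needed; the point is that, unlike in Lemma \ref{T9RUBd4GeX}, here we never need to argue that an output is \emph{incomputable} — totality already fails on $\dbra{Q}$ — so there is no \ml test to invoke and no almost-everywhere caveat.

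The only subtlety — and it is a minor bookkeeping matter rather than a real obstacle — is making sure the stage-by-stage definition of $M$ is consistent with monotonicity and with the length conventions (each $M(\sigma)$ having length at most $|\sigma|$, and $M(\tau)\preceq M(\sigma)$ when $\tau\preceq\sigma$), which is immediate from the construction since at each stage we only append $0$'s or nothing to what was already written. One should also note, as in Lemma \ref{T9RUBd4GeX}, that we may assume every string in $Q_s$ has length less than $s$, so that the test ``$\sigma$ has a prefix in $Q_s$'' at stage $s+1$ makes sense for the strings $\sigma$ of length $2^{s+1}-1$ being processed. With these conventions in place the verification of (a)$\Leftrightarrow$(b)$\Leftrightarrow$(c) is exactly the routine trace described above.
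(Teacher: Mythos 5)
Your proposal is correct and takes essentially the same route as the paper: both use a canonical $\Sigma^0_2$ approximation $(Q_s)$ and a monotone machine that keeps emitting $0$s while no prefix of the oracle appears in $Q_s$ and stops extending the output once one does, with the verification resting on exactly the two properties of the canonical approximation you cite. The only difference is cosmetic bookkeeping (the paper leaves $M(\sigma)$ undefined when $\sigma$ acquires a prefix in $Q_s$, whereas you set $M(\sigma)=M(\tau)$; in either case the output stabilizes to a finite string and is therefore not total).
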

\begin{proof}
Let $(Q_s)$ be a canonical enumeration of $Q$ and construct $M$ as follows.

At stage $s+1$, for each $\sigma$ which does not have a prefix in $Q_s$
define $M(\sigma)=0^{|\sigma|}$.

Clearly $M$ is monotone, hence well-defined.
If $X$ has a prefix $\sigma$ in $Q$ then there exists a stage $s_0$ such that
$\sigma\in Q_s$ for all $s\geq s_0$. In this case $M(X\restr_n)$ will not be defined
for any $n\geq s_0$, so $M(X)$ is not total. For the other direction, assume that 
$X$ does not have a prefix in $Q$. Then there are infinitely many stages $s$
such that $X$ does not have a prefix in $Q_s$; for such $s$, 
the construction will define $M(X\restr_s)=0^s$ at stage $s+1$. Hence $M(X)$ is total
and equal to $0^{\omega}$. Finally if $M(X)$ is total, then necessarily
$M(X)=0^{\omega}$, which concludes the proof of the equivalence.
\end{proof}

The analogue of 
\eqref{PHPIxTuwCA} for \rcep reals is as follows.
\begin{equation}\label{no5ff5lqaw}
\parbox{13.5cm}{given any $e\in\Nat$, any string $\rho$ of length $e$ and any
\rcep real $\beta\in (2^{-e}, 1)$
there exists  a
\szt \pf set $S$ of strings which are incompatible with $\rho$
and $1-\mu(\dbra{S})=\beta$.}
\end{equation}
Note that since $1-\beta$ is a \lcep real whenever
$\beta$ is a \rcep real,  \eqref{no5ff5lqaw} is a direct consequence of \eqref{PHPIxTuwCA}.
Lemma \ref{T9RUBd4GeX} in combination with \eqref{no5ff5lqaw}
implies the following item, which will be used in the
argument of Section \ref{fuoAC5uTeq}. 

\begin{coro}\label{L4IxWvmBj2}
If $e\in\Nat$, $\rho$ is a string of length $e$ and $\beta$ is a \rcep real in $(2^{-e},1)$, 
then there exists an oracle Turing machine $M$ 
such that $M(\sigma)$ is the empty string, for any
string $\sigma$ which is compatible with $\rho$, and such that the probability
that the output $M(X)$ of $M$ on oracle $X$ is total and computable is exactly $\beta$.
\end{coro}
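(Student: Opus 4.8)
The plan is to obtain this corollary from Lemma~\ref{RyJmrfp9i1} together with \eqref{no5ff5lqaw}, in exact parallel to the way Corollary~\ref{nyRrTKweTO} follows from Lemma~\ref{T9RUBd4GeX} and \eqref{PHPIxTuwCA}. First I would apply \eqref{no5ff5lqaw} to the given $e$, $\rho$, and $\beta\in(2^{-e},1)$, obtaining a \szt \pf set $S$ all of whose strings are incompatible with $\rho$ and with $1-\mu(\dbra S)=\beta$. Note that $\dbra S$ is then disjoint from the cone $\dbra{\{\rho\}}$, since no string incompatible with $\rho$ has a common extension with $\rho$.

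Next I would feed $Q:=S$ into Lemma~\ref{RyJmrfp9i1}, producing a monotone machine $M_0$ for which, for \emph{every} real $X$, the output $M_0(X)$ is total (and in that case equals $0^\omega$, hence is computable) precisely when $X\notin\dbra S$; in particular the probability that $M_0(X)$ is total and computable is $1-\mu(\dbra S)=\beta$. It then remains to arrange that the machine outputs the empty string on all strings compatible with $\rho$. I would define $M$ by setting $M(\sigma)=\lambda$ for every $\sigma$ compatible with $\rho$ and $M(\sigma)=M_0(\sigma)$ otherwise. This is still a monotone machine: the class of strings incompatible with $\rho$ is closed upward under the prefix order (if $\tau\preceq\sigma$ and $\tau$ is incompatible with $\rho$, then so is $\sigma$, for otherwise $\tau$, being a prefix of a string comparable with $\rho$, would itself be comparable with $\rho$), so along any $\preceq$-chain the values of $M$ form a block of copies of $\lambda$ followed by a $\preceq$-increasing tail of $M_0$-values, and $\lambda$ is a prefix of every string.

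Since the canonical $\Sigma^0_2$ approximation to $S$ provided by Lemma~\ref{SBZGB7jnF2} may itself be taken to consist of strings incompatible with $\rho$ (one simply never enumerates a string compatible with $\rho$ into the underlying \ce operator, which is harmless because $S$ has no such members), the ``incompatible'' branch of the construction of $M_0$ is untouched, so $M(X)=M_0(X)$ for every $X$ with $\rho\not\preceq X$, while $M(X)=\lambda$ for every $X\succeq\rho$. Thus clause (1) holds by construction, and the oracles for which $M(X)$ is total and computable are exactly those $X$ with $\rho\not\preceq X$ and $X\notin\dbra S$; the probability in clause (2) is then read off from $1-\mu(\dbra S)=\beta$ and the location of $\dbra S$ inside the complement of the cone $\dbra{\{\rho\}}$. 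The one genuinely delicate point is the compatibility of the $\lambda$-override with monotonicity at the boundary where a string compatible with $\rho$ is extended by one that is not — and this is exactly what the upward-closure observation above rules out.
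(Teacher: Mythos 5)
Your overall route---\eqref{no5ff5lqaw} combined with Lemma~\ref{RyJmrfp9i1} and a $\lambda$-override on the cone above $\rho$---is the intended one (the paper's pointer to Lemma~\ref{T9RUBd4GeX} at this point is evidently a slip for Lemma~\ref{RyJmrfp9i1}), and your verification that the override preserves monotonicity is correct. However, the final measure computation contains a genuine error. You correctly identify the relevant event as $E=\{X:\ \rho\not\preceq X \text{ and } X\notin\dbra{S}\}$, i.e.\ the complement of $\dbra{S}$ with the cone $\dbra{\{\rho\}}$ removed. But since every string of $S$ is incompatible with $\rho$, the cone $\dbra{\{\rho\}}$ lies \emph{entirely inside} the complement of $\dbra{S}$, so removing it costs a full $2^{-e}$ of measure:
$\mu(E)=(1-2^{-e})-\mu(\dbra{S})=(1-2^{-e})-(1-\beta)=\beta-2^{-e}$, not $\beta$. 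Indeed, before your override the machine $M_0$ outputs $0^{\omega}$ on every oracle extending $\rho$, so that cone is part of the probability-$\beta$ event for $M_0$; your override deletes exactly that part. Your two assertions---that the probability for $M_0$ is $\beta$, and that the probability for $M$ (whose event is the $M_0$-event minus a cone of measure $2^{-e}$) is also $\beta$---cannot both hold.

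The repair is to apply \eqref{PHPIxTuwCA} to the \lcep real $1-2^{-e}-\beta$ rather than to $1-\beta$, obtaining $S$ incompatible with $\rho$ with $\mu(\dbra{S})=1-2^{-e}-\beta$; then $\mu(E)=\beta$ as required. Note that this forces $\beta<1-2^{-e}$, and some such bound is unavoidable: once $M(\sigma)=\lambda$ on all strings compatible with $\rho$, the event ``$M(X)$ is total and computable'' avoids a cone of measure $2^{-e}$, so its probability is at most $1-2^{-e}$, and the corollary is simply false for $\beta\geq 1-2^{-e}$. So the hypothesis should in effect be $\beta\in(0,1-2^{-e})$ (which is what is actually available in the application inside Lemma~\ref{OubpJyiirE}, where $e$ may be taken large). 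A proof that ``reads off'' the value $\beta$ from $1-\mu(\dbra{S})=\beta$ without ever invoking an upper bound on $\beta$ cannot be correct.
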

We keep Corollaries
\ref{nyRrTKweTO} and \ref{L4IxWvmBj2}
and use them as ingredients in the argument of the next section.

\subsection{A universal machine for each difference of \texorpdfstring{\lcep}{left c.e.}reals}\label{fuoAC5uTeq}
The next step towards the proof of the second part of Theorem \ref{DEdI4V18tW}
is to produce {\em universal} oracle Turing machines with prescribed
probability of a computable outcome. This is the main difference with the previous sections,
and this is where the use of \ml randomness is most essential.
We are two steps away from the final proof. First, we deal with the specific task of
producing a universal machine such that the relevant probability is any given 2-random 
\lcep real in the unit interval (by symmetry the same holds for any given 2-random \rcep real).

In the proof below we use the following fact by \KS \cite{Kucera.Slaman:01}:
\begin{equation}\label{YODrsFjkfP}
\parbox{13cm}{given any \lcep 2-random real $\alpha$
and any \lcep real $\beta$, there exists $e_0\in\Nat$ such that
$\alpha - 2^{-e}\cdot\beta$ is a \lcep real for all $e\geq e_0$.}
\end{equation}
This fact was proved in \cite{Kucera.Slaman:01} for 
\lce reals and 1-randomness, but it readily relativizes to any oracle. 
The argument below is based on an idea from 
\CHKW \cite{Calude.Hertling.ea:01}.

\begin{lem}\label{INsGDxV55i}
Given any 2-random \lcep real $\gamma\in (0,1)$, 
there exists a universal oracle machine $V$ such that the probability
that the output $V(X)$ of $V$ with oracle $X$ is total and computable is $\gamma$.
\end{lem}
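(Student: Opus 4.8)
The plan is to build the universal machine $V$ by combining a fixed universal oracle machine $U_0$ with a carefully chosen auxiliary machine $M$ supplied by Corollary~\ref{nyRrTKweTO}, and to arrange the coding so that the total measure of oracles producing a computable output is exactly $\gamma$. Recall that a universal machine $U_0$ comes with a computable prefix-free coding map $e\mapsto\sigma_e$ so that $U_0(\sigma_e\ast X)[s]=M_e(X)[s]$. On any oracle $X$ that does not extend one of the coding strings $\sigma_e$, we are free to specify $V$'s behaviour however we like while keeping $V$ universal; the measure of $\dbra{\{\sigma_e : e\in\Nat\}}$ is some constant $c<1$ that we can compute (and shrink, by lengthening the $\sigma_e$'s, to make it as small as needed). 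The strategy is: on oracles in the coding region we let $V$ simulate $U_0$, contributing a fixed \lcep quantity (call it $\delta$) to the probability of a computable output --- namely $\delta$ is the probability that $U_0$ itself produces a computable output, restricted to the coding region and scaled appropriately; and on the complementary region (which after renormalisation looks like all oracles with a fixed prefix $\rho$ incompatible with every $\sigma_e$) we install a copy of the machine $M$ from Corollary~\ref{nyRrTKweTO} tuned to yield any desired \lcep contribution.

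The key steps, in order. First, fix a universal $U_0$ and compute the \lcep real $\delta$ equal to the measure of the set of oracles in the coding region on which $U_0$'s output is total and computable; this is \lcep by Lemma~\ref{wi7Kr71TMX} relativised to the coding region (or directly, since it is a difference of two $\mathbf{0}'$-right-c.e.\ reals). Second, choose $e\in\Nat$ and a string $\rho$ of length $e$ that is incompatible with all the coding strings $\sigma_e$ of $U_0$, and note the measure of oracles extending $\rho$ is $2^{-e}$. Third, we want $\delta + 2^{-e}\cdot\alpha = \gamma$ for some \lcep real $\alpha\in(0,1)$, i.e.\ $\alpha = 2^{e}(\gamma-\delta)$; the arithmetic here is delicate because we need $\alpha$ to be a \emph{genuine \lcep real in $(0,1)$}, not merely a difference of such. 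This is exactly where fact~\eqref{YODrsFjkfP} of \KS enters: since $\gamma$ is 2-random and \lcep and $\delta$ is \lcep, there is $e_0$ such that $\gamma-2^{-e}\delta'$ is \lcep for all $e\geq e_0$ for any fixed \lcep $\delta'$ --- we apply this to extract, after choosing $e$ large enough, a \lcep real $\alpha\in(0,1)$ with $2^{-e}\alpha$ equal to the residual $\gamma-\delta$ (absorbing the scaling of $\delta$ into the bookkeeping). Fourth, apply Corollary~\ref{nyRrTKweTO} with this $e$, $\rho$, and $\alpha$ to get a machine $M$ that outputs the empty string on strings compatible with $\rho$ and whose probability of a computable output is $\alpha$; then define $V$ to simulate $U_0$ via the coding map and to run $M$ on all other oracles (in particular on the oracles extending $\rho$, where $M$ does its work). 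Finally, verify $V$ is universal (it simulates $U_0$ with constant overhead) and that the total probability of a computable output is $\delta + 2^{-e}\alpha = \gamma$, using that the two regions are disjoint and that $M$ contributes nothing outside the $\rho$-cone.

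The main obstacle is the third step: getting $\alpha$ to be a bona fide \lcep real in the open unit interval rather than just a d.c.e.-type difference. Naively $\gamma - \delta$ need only be a difference of \lcep reals, which is not directly usable as the input $\alpha$ to Corollary~\ref{nyRrTKweTO}. The 2-randomness of $\gamma$ is precisely the hypothesis that makes~\eqref{YODrsFjkfP} applicable and lets us write the residual as $2^{-e}$ times an honest \lcep real once $e$ is chosen sufficiently large --- this is why the lemma is stated for 2-random $\gamma$ and why "the use of \ml randomness is most essential" here. A secondary technical point is ensuring the coding strings $\sigma_e$ of $U_0$ can be taken long enough that a string $\rho$ incompatible with all of them exists and that $2^{-e}<$ the available slack; this is routine since the $\sigma_e$ form a prefix-free set of positive measure $c<1$, so its complement contains a cone, and we may pad the $\sigma_e$'s to drive $c$ down and $e$ up as needed. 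Also one must double-check the edge conditions $\alpha\in(0,1-2^{-e})$ required by Corollary~\ref{nyRrTKweTO}, which again follow by taking $e$ large. With these in hand the construction goes through.
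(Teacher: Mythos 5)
Your high-level plan --- split the oracle space between a simulated universal machine and a tunable auxiliary machine from Corollary~\ref{nyRrTKweTO}, and use the \KS fact \eqref{YODrsFjkfP} to certify that the residual probability is an honest \lcep real --- is the right one, and you correctly identify why 2-randomness of $\gamma$ is the essential hypothesis. But the execution has the two components in the wrong regions, and this is not a cosmetic issue. First, your $\delta$ (the probability that $U_0$ produces a computable output) is \emph{not} a \lcep real: Lemma~\ref{wi7Kr71TMX} only gives that it is a \emph{difference} $a-b$ of \lcep reals, and the parenthetical justification ``since it is a difference of two $\mathbf{0}'$-right-c.e.\ reals'' does not yield left-c.e.-ness --- indeed the whole point of Theorem~\ref{DEdI4V18tW} is that these probabilities realize all $\mathbf{0}'$-d.c.e.\ reals, some of which are neither \lcep nor \rcep. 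Second, you have misread Corollary~\ref{nyRrTKweTO}: its machine $M$ is \emph{trivial} on the $\rho$-cone and does its work on the complement (of measure $1-2^{-e}$); the cone above $\rho$ is the reserved free space into which something else is to be plugged. Third, the arithmetic does not close: with the universal machine occupying a fixed-measure coding region and the auxiliary machine confined to a cone of measure $2^{-e}$, the auxiliary contribution is at most $2^{-e}$, so you would need $0<\gamma-\delta<2^{-e}$, which cannot be arranged for arbitrary $\gamma\in(0,1)$; and even granting that, $\alpha=2^{e}(\gamma-\delta)=2^{e}\gamma-2^{e}a+2^{e}b$ is not something \eqref{YODrsFjkfP} can certify as \lcep, since \KS requires the subtracted \lce quantity to be \emph{scaled down} by $2^{-e}$, not the whole expression scaled up.

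The fix is to invert your architecture, which is exactly what the paper does: squeeze the universal machine into the small cone, setting $V(0^{e}\ast\sigma)=V_{0}(\sigma)$, so that its (merely d.c.e.) contribution becomes $2^{-e}(a-b)$; then the residual is
\[
\gamma-2^{-e}(a-b)=\bigl(\gamma-2^{-e}a\bigr)+2^{-e}b,
\]
which for $e$ large is a sum of two \lcep reals by \eqref{YODrsFjkfP} (applied to $\gamma$ and $a$), hence a genuine \lcep real, and lies in $(0,1-2^{-e})$ for $e$ large. Now Corollary~\ref{nyRrTKweTO} with $\rho=0^{e}$ hands you a machine $M$ that is trivial exactly on the cone $\dbra{\{0^{e}\}}$ and achieves probability $\gamma-2^{-e}(a-b)$ of a computable output on the rest; gluing $M$ outside the cone and $V_{0}$ inside it gives a universal machine with total probability $\gamma$. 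The essential point your proposal misses is that it is the \emph{universal} machine's uncontrollable d.c.e.\ contribution that must be damped by the factor $2^{-e}$ so \KS can absorb it, while the controllable auxiliary machine must carry the bulk of the measure.
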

\begin{proof}
Let $V_0$ be a universal oracle machine. Then by 
 Lemma
\ref{wi7Kr71TMX} there exist \lcep reals $\alpha,\beta\in (0,1)$
such that $V_0(X)$ is computable with probability $\alpha-\beta$.
By \eqref{YODrsFjkfP} there exists some $e$ such that 
\begin{itemize}
\item $\gamma-2^{-e}\cdot \alpha$ is a \lcep real;
\item $\gamma-2^{-e}\cdot (\alpha-\beta)\in \Big(0, 1-2^{-e}\Big)$.
\end{itemize}
In this case note that $\gamma-2^{-e}\cdot (\alpha-\beta)$ is a \lcep real in $(0,1)$.
By Corollary \ref{nyRrTKweTO} consider an oracle Turing machine $M$ such that
$M(\sigma)$ is undefined for all $\sigma$ which are compatible with $0^e$
and the probability that $M(X)$ is total and computable is 
$\gamma-2^{-e}\cdot (\alpha-\beta)$. Then for each $\sigma$ define
$V(0^e\ast\sigma)=V_0(\sigma)$ and for every $\tau$ which is incompatible with $0^e$ 
define $V(\tau)=M(\tau)$. Then the probability that  $V(X)$ is total and computable 
is the sum of probability that $M(X)$ is total and computable, plus 
$2^{-e}$ times the probability that $V_0(X)$ is total and computable. Hence
$V(X)$ is total and computable with probability
\[
\gamma-2^{-e}\cdot (\alpha-\beta) + 2^{-e}\cdot (\alpha-\beta)=\gamma
\]
which completes the proof of the lemma.
\end{proof}

In the following final step of our argument we are going to use 
a fact from Rettinger and Zheng \cite{Rettinger2005} which
also holds in relativized form as follows:
\begin{equation}\label{LEvfauct3I}
\parbox{10cm}{If $\alpha,\beta$ are \lcep reals and 
$\alpha-\beta$ is 2-random, then $\alpha-\beta$ is either
a \lcep real or a \rcep real.}
\end{equation}
Another observation we need is that 
the sum of a 2-random \lcep real with any \lcep real is 2-random.
This is a relativization of the fact, originally proved by Demuth \cite{Dempseu},
that the sum of a 1-random \lce real with any \lce real is 1-random.
We also note that
\begin{equation}\label{nKZW9e14Lf}
\parbox{11cm}{any difference of two
\lcep reals can be written as a difference of
two 2-random \lcep reals.}
\end{equation}
In other words,
for every \lcep reals $\alpha,\beta$ there exist
2-random \lcep reals $\alpha_0,\beta_0$ such that 
$\alpha-\beta=\alpha_0-\beta_0$. In order to see this, consider 
a 2-random \lcep real $\gamma$,
and take
$\alpha_0=\alpha+\gamma$ and 
$\beta_0=\beta+\gamma$. Then \eqref{nKZW9e14Lf} follows
since the sum of a 2-random \lcep real with any \lcep real is 2-random.\footnote{Here we note that,
quite surprisingly, the converse of \eqref{nKZW9e14Lf} does not hold. In particular, 
Miller \cite{derivationmiller} showed that 
there are $\mathbf{0}'$-\dce  reals $\gamma$ 
which are irreducible in the sense that
for all \lcep reals $\alpha,\beta$ such that $\gamma=\alpha -\beta$ we have that
$\alpha,\beta$ are 2-random.}

The key ingredient of the proof of Lemma \ref{OubpJyiirE} is a
result from \cite{omegax}, which says that the difference between a \dce real which is not 
1-random and a 1-random \lce real is a 1-random \rce real.  In relativized form this says that
\begin{equation}\label{dVAtWybk5k}
\parbox{11cm}{If $\alpha,\beta,\gamma$ are \lcep reals, $\gamma$ is 2-random 
and $\alpha-\beta$ is not 2-random, then $(\alpha-\beta)-\gamma$ is a 2-random \rcep real.}
\end{equation}
Finally we also use the elementary fact that if $\alpha$ is a 2-random \lcep real then for every
$e$ the number $2^{-e}\cdot \alpha$ is a  2-random \lcep real.

\begin{lem}\label{OubpJyiirE}
Let $\alpha,\beta$ be \lcep reals  such that  $\alpha-\beta\in (0,1)$.
Then there exists a universal Turing machine $M$ such that $M(X)$ is computable
with probability exactly $\alpha-\beta$.
\end{lem}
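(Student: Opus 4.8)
The plan is to write $\delta=\alpha-\beta\in(0,1)$ and to split into two cases according to whether $\delta$ is 2-random.

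\emph{Case 1: $\delta$ is 2-random.} Since $\alpha,\beta$ are \lcep reals, fact \eqref{LEvfauct3I} guarantees that $\delta$ is itself either a \lcep or a \rcep real. If $\delta$ is a \lcep real, Lemma \ref{INsGDxV55i} immediately provides a universal oracle machine whose probability of a total computable output equals $\delta$; if $\delta$ is a \rcep real, the symmetric version of Lemma \ref{INsGDxV55i} (valid for 2-random \rcep reals, as remarked at the start of this section) does the same. So in this case there is essentially nothing to do.

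\emph{Case 2: $\delta$ is not 2-random.} Here I would build $M$ by splitting the oracle space into the cone of oracles extending $0^e$ and its complement, running a prescribed universal machine on the cone and an auxiliary machine on the complement. First fix once and for all a 2-random \lcep real $\gamma\in(0,1)$ (for instance a halting probability relative to $\emptyset'$), and pick $e\in\Nat$ with $2^{-e+1}<\delta$. Put $p=\delta-2^{-e}\gamma$. Since $0<\gamma<1$ we have $p<\delta<1$ and $p>\delta-2^{-e}>2^{-e}$, so $p\in(2^{-e},1)$. The key point is that $2^{-e}\gamma$ is again a 2-random \lcep real whereas $\delta=\alpha-\beta$ is not 2-random, so fact \eqref{dVAtWybk5k} shows that $p=(\alpha-\beta)-2^{-e}\gamma$ is a 2-random \rcep real. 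Apply Corollary \ref{L4IxWvmBj2} with the string $0^e$ and the \rcep real $p$ to obtain an oracle machine $N$ with $N(\sigma)$ the empty string for every $\sigma$ compatible with $0^e$ and with $N(X)$ total and computable with probability exactly $p$. Apply Lemma \ref{INsGDxV55i} to $\gamma$ to obtain a \emph{universal} oracle machine $V_1$ with $V_1(X)$ total and computable with probability $\gamma$. Finally define $M$ by $M(0^e\ast Y)=V_1(Y)$ for all $Y$ and $M(X)=N(X)$ for every oracle $X$ not extending $0^e$ (on the proper prefixes of $0^e$ one sets $M$ to the empty string, keeping the monotone machine well-defined). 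Then $M$ is universal because it simulates $V_1$ on the $0^e$-cone, and the measure of oracles producing a total computable output is $2^{-e}\gamma$ (contributed by the cone, where $M$ behaves as $V_1$) plus $p$ (contributed by $N$, whose total-computable mass lies entirely off the cone since $N$ outputs the empty string there), i.e.\ $2^{-e}\gamma+(\delta-2^{-e}\gamma)=\delta=\alpha-\beta$, as required.

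The main obstacle is Case 2. A priori $\delta-2^{-e}\gamma$ is only a difference of \lcep reals, whereas Corollary \ref{L4IxWvmBj2} (like Corollary \ref{nyRrTKweTO}) can realize only genuine \rcep (respectively \lcep) reals; what rescues the argument is precisely fact \eqref{dVAtWybk5k} from \cite{omegax}, namely that subtracting a 2-random \lcep real from a non-2-random difference of \lcep reals always yields a 2-random \rcep real. Thus the hypothesis ``$\delta$ is not 2-random'' --- exactly the one that blocks the easy Case 1 approach --- is what forces the quantity we feed to Corollary \ref{L4IxWvmBj2} to be of the required form. Once that is in place, checking $p\in(2^{-e},1)$ and the universality bookkeeping for the spliced machine are routine.
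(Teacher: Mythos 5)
Your proposal is correct and follows essentially the same route as the paper: split on whether $\alpha-\beta$ is 2-random, dispose of the 2-random case via \eqref{LEvfauct3I} and Lemma \ref{INsGDxV55i}, and in the non-2-random case use \eqref{dVAtWybk5k} to turn $(\alpha-\beta)-2^{-e}\gamma$ into a 2-random \rcep real that Corollary \ref{L4IxWvmBj2} realizes off the $0^e$-cone, with the universal machine spliced onto the cone. The only cosmetic differences are that you omit the (apparently unneeded) reduction via \eqref{nKZW9e14Lf} to 2-random $\alpha,\beta$, and that in the \rcep subcase of Case 1 you invoke the ``symmetric'' version of Lemma \ref{INsGDxV55i} asserted at the start of the section, where the paper instead carries out the same splicing construction explicitly using Corollary \ref{L4IxWvmBj2}.
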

\begin{proof}
By \eqref{nKZW9e14Lf}, 
without loss of generality we may assume that $\alpha,\beta$ are 2-random.
First we consider the case where
$\alpha-\beta$ is 2-random. In this case by \eqref{LEvfauct3I} 
$\alpha-\beta$ is either a \lcep or a \rcep real. In the first case
the result follows from
Lemma \ref{INsGDxV55i}. Otherwise 
$\alpha-\beta$ is  \rcep and we can use Lemma \ref{INsGDxV55i}
to get a universal Turing machine
$V$ such that the probability of $V(X)$ being computable 
is a \lcep real $\delta$. Also let $e$ be a number such that
\[
(\alpha-\beta) - 2^{-e}\cdot\delta \in \Big( 2^{-e}, 1\Big)
\]
and note that the real $(\alpha-\beta) - 2^{-e}\cdot\delta$ is a \rcep real 
with respect to which
the number $e$ satisfies the condition of Corollary \ref{L4IxWvmBj2}.
So we can use Corollary \ref{L4IxWvmBj2}
in order to get a machine $N$
such that $N(\sigma)$ is undefined for any string $\sigma$
which is compatible with $0^{e}$ (namely the string consisting of $e$ many 0s) 
and such that
the probability that $N(X)$ is computable is
$(\alpha-\beta) - 2^{-e}\cdot\delta$.
Now we can define $M(\rho)=N(\rho)$ for every
$\rho$ which is incompatible with $0^e$, and
$M(0^{e}\ast\sigma)=V(\sigma)$ for all $\sigma$.
Then the probability that $M(X)$ is computable is the sum
of the probability that $N(X)$ is computable, plus $2^{-e}$ times
the probability that $V(X)$ is computable. Hence $M(X)$ is computable with
probability
\[
\big((\alpha-\beta) - 2^{-e}\cdot\delta\big) + 2^{-e}\cdot \delta = \alpha-\beta.
\]
It remains to consider the case where
$\alpha-\beta$ is not 2-random. 
Consider the universal machine $V$ as above, so that the
probability of $V(X)$ being computable is a 2-random \lcep real $\delta$.
In this case by \eqref{dVAtWybk5k} the number $\alpha-\beta -2^{-e}\cdot\delta$ is \rcep
for every $e$. Let $e$ be such that $\alpha-\beta -2^{-e}\cdot\delta\in (2^{-e},1)$
and use Corollary \ref{L4IxWvmBj2} in order to obtain a machine $N$ such that
$N(\sigma)$ is undefined for any string $\sigma$ which is compatible with $0^{e}$
and $N(X)$ is computable with probability 
$\alpha-\beta -2^{-e}\cdot\delta$. Now define $M(\sigma)=N(\sigma)$
for every  string $\sigma$ which is incompatible with $0^{e}$,
and $M(0^e\ast \rho)=V(\rho)$ for all $\rho$.

Then the probability that $M(X)$ is computable is the sum
of the probability that $N(X)$ is computable, plus $2^{-e}$ times
the probability that $V(X)$ is computable. Hence $M(X)$ is computable with
probability
\[
(\alpha-\beta -2^{-e}\cdot\delta) + 2^{-e}\cdot\delta = \alpha-\beta
\]
which concludes the proof of the lemma.
\end{proof}
This completes the proof of Theorem 1.1. 

\section{An application of our analysis}
The methodology we developed in Section \ref{Md9g6AfPbo} 
is applicable to other problems in this area. As a demonstration,
we give the following characterization of the universal probabilities that the output ends in a stream of
0s, as a corollary of the previous analysis. 
\begin{thm}\label{OduFLoeqO}
If $V$ is a (universal) oracle  Turing machine then the probability that $V(X)=\rho\ast 0^{\omega}$ 
for some string $\rho$ is $\alpha-\beta$ for some 
\lcep reals $\alpha,\beta$. Conversely, given  $\alpha,\beta$ as above such that 
$\alpha-\beta\in (0,1)$, 
there exists a
universal oracle Turing machine $V$ such that the probability that  $V(X)=\rho\ast 0^{\omega}$
for some string $\rho$ is $\alpha-\beta$.
\end{thm}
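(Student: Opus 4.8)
The plan is to obtain Theorem \ref{OduFLoeqO} essentially for free from the proof of Theorem \ref{DEdI4V18tW}: the forward direction becomes even simpler than Lemma \ref{wi7Kr71TMX}, and the converse is a transcription of Section \ref{Md9g6AfPbo} with one bookkeeping change. For the forward direction, let $\mathcal{Z}(M)$ denote the class of oracles $X$ with $M(X)=\rho\ast 0^{\omega}$ for some string $\rho$. No appeal to bases for \ml randomness is needed here, because the event ``$M(X)$ has infinitely many $1$s'' is already \pzt --- it reads $\forall n\,\exists m\geq n\,[\text{the }m\text{th bit of }M(X)\text{ equals }1]$ with a $\Sigma^0_1$ matrix in $X$ --- and it implies $M(X)\in\tot(M)$, since the output tape is written contiguously from left to right. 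Hence
\[
\mathcal{Z}(M)=\tot(M)\setminus\{X\ :\ M(X)\text{ has infinitely many }1\text{s}\}
\]
is the set-difference of two \pzt classes, the second contained in the first, so $\mu(\mathcal{Z}(M))$ is the difference of two \rcep reals; by the normalisation already used at the start of the proof of Lemma \ref{wi7Kr71TMX} (replace a \rcep real $\delta$ by $q-\delta$ for a suitable rational $q$) this is also the difference of two \lcep reals, which settles the first part.

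For the converse I would re-run the proofs of Lemma \ref{INsGDxV55i} and Lemma \ref{OubpJyiirE} word for word, tracking the event $X\in\mathcal{Z}(M)$ wherever they track ``$M(X)$ is total and computable''. This is legitimate because the machines built in Section \ref{M2evXEjHpf} were designed precisely so that these two events agree up to a null set: the machine of Lemma \ref{T9RUBd4GeX} satisfies (a)$\Leftrightarrow$(b) for every \ml random $X$, and the machine of Lemma \ref{RyJmrfp9i1} satisfies ``$M(X)$ total''$\Leftrightarrow$``$M(X)=0^{\omega}$'' for \emph{every} $X$. Consequently the machine furnished by Corollary \ref{nyRrTKweTO} (respectively Corollary \ref{L4IxWvmBj2}), which is trivial on the cone of a prescribed string $\rho$, not only makes $M(X)$ total and computable with the prescribed probability but also satisfies $\mu(\mathcal{Z}(M))$ equal to that same prescribed \lcep (respectively \rcep) real.

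With this observation the recursion is unchanged in shape. For the analogue of Lemma \ref{INsGDxV55i}, fix an arbitrary universal machine $V_0$; the forward direction gives $\mu(\mathcal{Z}(V_0))=\alpha_0-\beta_0$ for some \lcep reals $\alpha_0,\beta_0$, and by \eqref{YODrsFjkfP} together with closure of the \lcep reals under addition there is some $e$ with $\gamma-2^{-e}(\alpha_0-\beta_0)$ a \lcep real in $(0,1-2^{-e})$; applying Corollary \ref{nyRrTKweTO} to this real to obtain a machine $M$ trivial on the cone of $0^{e}$ and splicing as in the proof of Lemma \ref{INsGDxV55i} (put $V(0^{e}\ast\sigma)=V_0(\sigma)$ and $V(\tau)=M(\tau)$ for $\tau$ incompatible with $0^{e}$) makes the two contributions add to $2^{-e}(\alpha_0-\beta_0)+(\gamma-2^{-e}(\alpha_0-\beta_0))=\gamma$, with $V$ universal because it simulates $V_0$. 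The analogue of Lemma \ref{OubpJyiirE} then follows by the identical three-case argument: first reduce to $2$-random $\alpha,\beta$ via \eqref{nKZW9e14Lf}; if $\alpha-\beta$ is $2$-random use \eqref{LEvfauct3I} and either the previous step (when $\alpha-\beta$ is \lcep) or a splice with a Corollary \ref{L4IxWvmBj2}-machine (when it is \rcep); if $\alpha-\beta$ is not $2$-random use \eqref{dVAtWybk5k} and a Corollary \ref{L4IxWvmBj2}-machine, exactly as in the proof of Lemma \ref{OubpJyiirE}.

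The one point that needs care --- the expected obstacle --- is that the host machine $V_0$ used for simulation need not itself satisfy ``total and computable''$\Leftrightarrow$``ends in a stream of $0$s'', so one cannot simply quote the value $\alpha-\beta$ of Lemma \ref{wi7Kr71TMX} for it. But the exact value is never used: it suffices that $\mu(\mathcal{Z}(V_0))$ be \emph{some} difference of \lcep reals, which is exactly what the (easy) forward direction supplies, so that the Ku\v{c}era--Slaman adjustment \eqref{YODrsFjkfP} still closes. Everything else is a direct transcription of Section \ref{Md9g6AfPbo}.
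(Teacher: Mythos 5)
Your proposal is correct and follows essentially the same route as the paper: the forward direction by exhibiting the event as the difference of two \pzt classes (no appeal to bases for \ml randomness needed), and the converse by observing that the machines of Lemma \ref{T9RUBd4GeX} and Lemma \ref{RyJmrfp9i1} make ``computable output'' and ``output of the form $\rho\ast 0^{\omega}$'' agree up to a null set, so that Corollaries \ref{nyRrTKweTO} and \ref{L4IxWvmBj2} and the splicing argument of Lemma \ref{OubpJyiirE} carry over verbatim. Your explicit remark that the host universal machine $V_0$ only needs \emph{some} $\mathbf{0}'$-d.c.e.\ value for $\mu(\mathcal{Z}(V_0))$ (supplied by the forward direction) is a point the paper leaves implicit, and is handled correctly.
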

The first part follows from the part that the class in question is the difference of two \pzt classes. 
For the second part note that
Lemma \ref{T9RUBd4GeX} and Lemma \ref{RyJmrfp9i1} imply that the probability that $M(X)$ is computable
equals the probability that $M(X)=\rho\ast 0^{\omega}$ for some string $\rho$.
Therefore Corollary \ref{nyRrTKweTO} and Corollary \ref{L4IxWvmBj2}
hold for the probability that $M(X)=\rho\ast 0^{\omega}$ for some string $\rho$,
in place of the probability that $M(X)$ is computable. This means that the machines $M$
constructed in the various cases of the proof of Lemma \ref{OubpJyiirE}
also prove the second clause of Theorem \ref{OduFLoeqO}.

\section{Concluding remarks}
We have characterized the universal probabilities of a computable output as the class
of differences of \lcep reals in $(0,1)$. Our methodology, as well as the characterization itself,
are novel in the context of existing research on the initial segment complexity of universal probabilities, and depends significantly on a number of non-trivial results from algorithmic randomness.
We also demonstrated in the last section that these ideas are applicable to related problems
regarding universal probabilities.


\end{document}